\setlist{nosep}
\setlist{noitemsep}
\definecolor{dkgreen}{rgb}{0,0.6,0}
\definecolor{gray}{rgb}{0.5,0.5,0.5}
\definecolor{mauve}{rgb}{0.58,0,0.82}
\definecolor{applegreen}{rgb}{0.55, 0.71, 0.0}
\definecolor{amber}{rgb}{1.0, 0.75, 0.0}
\definecolor{firebrick}{rgb}{0.7, 0.13, 0.13}
\definecolor{darkblue}{rgb}{0,0,0.55}
\def\eg{\emph{e.g.,}\xspace}
\def\ie{\emph{i.e.,}\xspace}
\def\etal{\emph{et al.}\xspace}
\def\Inf{\operatornamewithlimits{inf\vphantom{p}}}
\def\Sup{\operatornamewithlimits{sup\vphantom{p}}}
\setlist{noitemsep}
\newcommand{\mathlower}[0]{\mathrm{lower}\xspace}
\newcommand{\mathupper}[0]{\mathrm{upper}\xspace}
\newcommand{\cpmax}[0]{C_p^{\max}}
\newcommand{\cpmin}[0]{C_p^{\min}}
\newtheorem{theorem}{Theorem}[section]
\newtheorem{definition}{Definition}[section]
\newtheorem{corollary}{Corollary}[section]
\newtheorem{remark}{Remark}[section]
\DeclareMathOperator{\randominit}{random\_init}
\DeclareMathOperator{\variance}{variance}
\DeclareMathOperator{\modify}{modify}
\begin{document}

\title{Bounded and Unbiased Composite Differential Privacy}

\author{\IEEEauthorblockN{Kai Zhang\IEEEauthorrefmark{1},
Yanjun Zhang\IEEEauthorrefmark{2}\IEEEauthorrefmark{3},
Ruoxi Sun\IEEEauthorrefmark{2}, 
Pei-Wei Tsai\IEEEauthorrefmark{1}, \\
Muneeb Ul Hassan\IEEEauthorrefmark{4}, 
Xin Yuan\IEEEauthorrefmark{2}, 
Minhui Xue\IEEEauthorrefmark{2}, and
Jinjun Chen\IEEEauthorrefmark{1}}
\IEEEauthorblockA{\IEEEauthorrefmark{1}Swinburne University of Technology, Australia} 
\IEEEauthorblockA{\IEEEauthorrefmark{2}CSIRO's Data61, Australia}
\IEEEauthorblockA{\IEEEauthorrefmark{3}University of Technology Sydney, Australia}
\IEEEauthorblockA{\IEEEauthorrefmark{4}Deakin University, Australia}}

\maketitle

\begin{abstract}
The objective of differential privacy (DP) is to protect privacy by producing an output distribution that is indistinguishable between any two neighboring databases. 
However, traditional differentially private mechanisms tend to produce unbounded outputs in order to achieve maximum disturbance range, which is not always in line with real-world applications. Existing solutions attempt to address this issue by employing post-processing or truncation techniques to restrict the output results, but at the cost of introducing bias issues. In this paper, we propose a novel differentially private mechanism which uses a composite probability density function to generate bounded and unbiased outputs for any numerical input data. The composition consists of an activation function and a base function, providing users with the flexibility to define the functions according to the DP constraints. We also develop an optimization algorithm that enables the iterative search for the optimal hyper-parameter setting without the need for repeated experiments, which prevents additional privacy overhead. Furthermore, we evaluate the utility of the proposed mechanism by assessing the variance of the composite probability density function and introducing two alternative metrics that are simpler to compute than variance estimation. Our extensive evaluation on three benchmark datasets demonstrates consistent and significant improvement over the traditional Laplace and Gaussian mechanisms. The proposed bounded and unbiased composite differentially private mechanism will underpin the broader DP arsenal and foster future privacy-preserving studies.
\end{abstract}

\section{Introduction}
Differential privacy (DP) has become one of the most widely adopted privacy-preserving techniques and is now an essential technology in real-world applications. It originates from the concept of semantic security in cryptography, which aims to ensure that any attacker cannot distinguish individual information with respect to a specific dataset. DP mechanisms exhibit several characteristics. First, DP has a rigorous mathematical definition and a formal measure of privacy loss~\cite{dwork2011firm}. Second, most DP mechanisms can generate mathematically unbiased noise to preserve privacy while still maintaining the utility of the perturbed datasets for statistical analysis. Furthermore, utilizing DP mechanisms does not typically require significant additional computational resources. The merits of DP have led to its widespread adoption in various privacy-sensitive tasks, including healthcare~\cite{wang2022privacy, tang2019secure,rajput2021privacy}, finance~\cite{ma2019privacy}, government~\cite{piao2019privacy,sarathy2023don}, social media~\cite{jiang2021applications,wang2016real,qin2017generating}, and Internet of Things~\cite{Yang2022,Sun2020,Zheng2020}, and it has been increasingly incorporated by top tech giants, such as Microsoft~\cite{2}, Google~\cite{3}, and Apple~\cite{4,5}. 

The implementation of DP mechanisms usually involves adding noise or perturbation to raw data before releasing it, resulting in an output distribution that is indistinguishable from that of neighboring databases. Specifically, the randomness of the perturbation function used in DP mechanisms causes the overlapping interval of density functions between neighboring databases to become indistinguishable. This prevents attackers from inferring specific individual data by comparing output results, thus guarding against background knowledge attacks.

\begin{figure}[t]
\centering
\includegraphics[width=\linewidth]{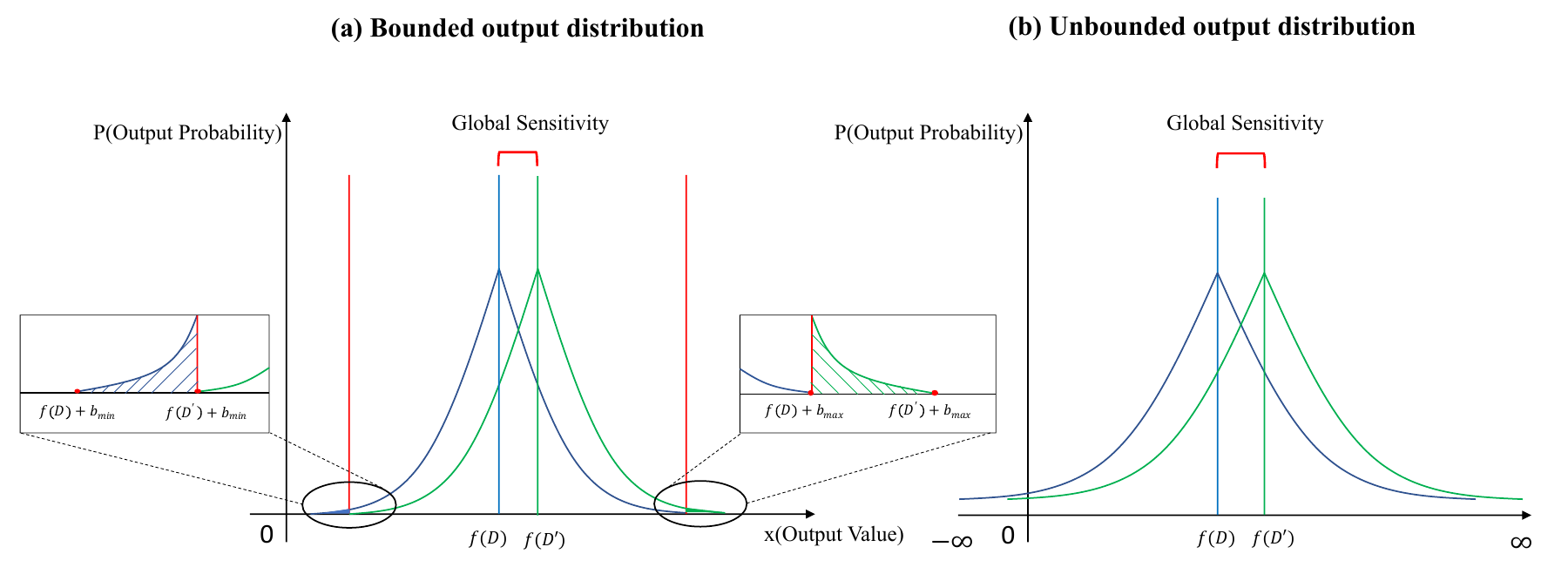}
\caption{Bounded and unbounded output distribution for neighboring databases $D$ and $D^\prime$.}\label{fig_output_distribution}
\end{figure}

To maintain indistinguishable characteristics, the output ranges after applying DP on querying neighboring databases should be the same. As demonstrated in Figure~\ref{fig_output_distribution}, we assume that the blue and green curves denote the output distributions of $\mathcal{M}(f(D))$ and $\mathcal{M}(f(D^\prime))$, respectively, where $D$ and $D^\prime$ are neighboring databases, $f(\cdot)$ is the query function, and $M(\cdot)$ refers to the differentially private mechanism applied. If DP applied to neighboring databases results in different output ranges (\eg $[f(D)+b_{\mathrm{min}},f(D)+b_{\mathrm{max}}]$ and $[f(D^\prime)+b_{\mathrm{min}},f(D^\prime)+b_{\mathrm{max}}]$ in Figure~\ref{fig_output_distribution}(a), where $b_{\mathrm{min}}$ and $b_{\mathrm{max}}$ refer to the range of perturbation), the indistinguishability requirement of DP is violated and the data privacy is compromised, since once the output result locates in $[f(D)+b_{\mathrm{min}},f(D^\prime)+b_{\mathrm{min}}]$, the attacker can firmly assert that the input database is $D$. Therefore, traditional DP mechanisms, such as the Laplace mechanism and Gaussian mechanism, have unbounded output range $(\mathrm{Lap}, \mathrm{Gaus} \in(-\infty,+\infty))$ for numerical input data (as shown in Figure~\ref{fig_output_distribution}(b)). However, an unbounded output distribution in DP mechanisms can pose an issue: \textbf{The unbounded output space contradicts the practical constraints imposed by real-world input data. } 

Real-world datasets commonly exhibit restrictions within their range. For instance, the average height of a group of children cannot be negative, and the median age of a patient group cannot exceed 150. To accommodate these constraints, post-processing methods or truncated DP mechanisms are often used to constrain the output results within a specific range. Specifically, the post-processing technique is applied to the outputs of perturbed data, which is then projected onto the feasible range~\cite{10,11,12,13,14}. In contrast, the truncated approach directly modifies the range of the probability density function to adhere to the valid region of the query~\cite{37,36,38}. However, using the existing post-processing and truncated approaches can result in bias issues~\cite{6}. In addition, when applying a relatively smaller privacy budget, the noise generated by traditional DP mechanisms becomes uncontrollable, resulting in a significant loss of statistical utility. Therefore, post-processing or truncated approaches are not ideal solutions for real-world DP applications, and developing a bounded and unbiased perturbation mechanism remains a non-trivial challenge.

In this paper, we introduce a novel composite differentially private mechanism that is both bounded and unbiased, and can be used to conduct controllable perturbation for any numerical input data. Unlike traditional DP mechanisms, our approach does not require post-processing or truncating steps to restrict the output space, thereby avoiding the bias problems that can arise from these approaches. Concretely, the proposed mechanism is comprised of an activation function and a base function, which provides users with the flexibility to define the perturbation function. Additionally, we build an optimization algorithm to iteratively modify the hyper-parameter setup to enhance data utility. The iterative function can evaluate the deviation degree of the dataset through theoretical computation, which mitigates privacy overhead that arises from repeated experiments. 

To comprehensively evaluate the performance of the proposed mechanism, we conduct both theoretical analysis and empirical experiments. Theoretical analysis of the proposed mechanism involves an assessment of the variance of the composite probability density function. We also introduce two alternative metrics that are easier to compute than variance estimation to further simplify the evaluation of the mechanism's utility. In the empirical evaluation, we conduct extensive experiments on three benchmark datasets. Our results demonstrate consistent and significant improvement of the proposed mechanism over the traditional Laplace and Gaussian mechanisms. 

In summary, the main contributions of this paper are as follows:
\begin{itemize}[leftmargin=*]
\item We propose a novel differentially private mechanism, providing bounded and unbiased outputs for any numerical input data. It is achieved by composing an activation function which adjusts the output expectation to ensure the unbiased characteristic, and a base function which constructs the constant probability distribution for all outputs to guarantee the bounded feature and satisfy the definition of DP. To the best of our knowledge, the proposed solution is the \textit{first} practical differentially private mechanism with bounded and unbiased outputs for real use cases.
\item The proposed mechanism offers users the flexibility to define the perturbation function based on real-world scenarios. In design, we present six perturbation functions that combine three different activation functions and two base functions. In evaluation, we introduce two alternative metrics that are simpler to compute than variance estimation for evaluating the utility of the mechanism. We thoroughly verify the privacy-utility performance as well as its correctness and effectiveness both theoretically and empirically. 
\item In implementation, we develop an optimization algorithm that iteratively modifies the hyper-parameter setup to enhance data utility. It employs theoretical computation to evaluate the performance in iteration, eliminating the need for repeated experiments and preventing the privacy overhead arising from them.
\item We release the source code and the artifact at \url{https://github.com/CompositeDP/CompositeDP}, which creates a new tool for the broader DP arsenal to enable future privacy-preserving studies.
\end{itemize}

\section{Preliminaries}
Differential privacy (DP) is derived from the concept of semantic security in cryptography, in which the attacker cannot distinguish whether an individual exists in the specific dataset~\cite{8,dwork2011firm}. 

\begin{definition}[\textbf{$\epsilon$-DP}]
A randomized mechanism $\mathcal{M}: \mathcal{\tilde D} \rightarrow \mathcal{R}$ with domain $\mathcal{\tilde D}$ and range $\mathcal{R}$ satisfies $\epsilon$-DP if for any two neighboring datasets $D$, $D^\prime \in \mathcal{\tilde D}$ where $\|D-D^\prime\|_1=1$, and for any subset of outputs $S \subseteq \mathcal{R}$ it holds that
\begin{equation}\label{equ:epdeldp}
\Pr[\mathcal{M}(D) \in S] \leq e^\epsilon \Pr[\mathcal{M}(D^\prime) \in S].
\end{equation}
\end{definition}

\begin{definition}[\textbf{Laplace Mechanism}]
Given any function $f: \mathcal{\tilde D} \rightarrow \mathcal{R}$, the Laplace mechanism is defined as
\begin{equation}
\mathcal{M}_L(D, f(\cdot), \epsilon) = f(D) + Lap(\Delta f/\epsilon), 
\end{equation}
where $Lap(\Delta f/\epsilon)$ denotes the independent and identically distributed (i.i.d) random variables drawn from the Laplace distribution, and $\Delta f$ denotes the $\ell_1$-sensitivity of $f(\cdot)$
\begin{equation}
\Delta f = \max\limits_{\|D-D^\prime\|_1=1} \|f(D)-f(D^\prime)\|_1.
\end{equation}
\end{definition}

When applying the Gaussian noise mechanism~\cite{9,abadi2016deep}, the definition permits the possibility that plain $(\epsilon,0)$-DP can be broken with a preferably subpolynomially small probability $\delta$.  

\begin{definition}[\textbf{Gaussian Mechanism}]
Given any function $f: \mathcal{\tilde D} \rightarrow \mathcal{R}$, the Gaussian mechanism is defined as
\begin{equation}
\mathcal{M}_G(D, f(\cdot), \epsilon, \delta) = f(D) + \mathcal{N}(0,\sigma^2), 
\end{equation}
where
\begin{equation}
\sigma > \frac{\sqrt{2\ln{\frac{1.25}{\delta}}}\Delta_2 f}{\epsilon}. 
\end{equation}
Here, $\Delta_2 f$ is the $\ell_2$-sensitivity and given by 
\begin{equation}
\Delta_2 f = \max\limits_{\|D-D^\prime\|_1=1} \|f(D)-f(D^\prime)\|_2.
\end{equation}
\end{definition}

\begin{definition}[\textbf{Unbiased Perturbation}]\label{def:unbias}
Given any perturbation mechanism $\mathcal{M}$, let $P$ denote the probability density function of $\mathcal{M}$, and $f(D)$ denote the input to $\mathcal{M}$. To achieve an unbiased perturbation at any arbitrary point $x$ within the output space of $\mathcal{M}$, \ie $x \in \mathcal{R}$, the following condition must hold
\begin{equation}
\label{equ:unbiasunbound}
E(x) =  \int_{-\infty}^\infty P(x)x dx = f(D),
\end{equation}
with an unbounded output space, and 
\begin{equation}
\label{equ:unbias}
E(x) =  \int_{x_{\mathrm{min}}}^{x_{\mathrm{max}}} P(x)x dx = f(D), 
\end{equation}
with a bounded output $x\in[x_{\mathrm{min}},x_{\mathrm{max}}]$. 
\end{definition}

\section{The Composite Differentially Private Mechanism}
In this section, we begin by formalizing the composite differentially private mechanism and prove its unbiasedness and privacy guarantee, followed by the introduction of six examples of perturbation functions, with the components of activation functions and base functions. We finally verify the differential privacy guarantee of the proposed perturbation functions. 

\subsection{Formalizing the Composite Differentially Private Mechanism}
Traditional approaches for DP, such as the Laplace and the Gaussian mechanisms, are two special cases with unbounded outputs. However, in real-world scenarios, the Laplace and the Gaussian mechanisms inevitably introduce post-processing or truncating problems. Specifically, when users force the perturbed output value to fit within a reasonable range that depends on its data attribute, unpredictable bias issues arise which distort the statistical analysis results.

In this work, we propose a novel DP mechanism that naturally incorporates the bounded output space. Considering that the probability density function (the phrase ``probability density function'' is referred to as ``perturbation function'' below) of the perturbation mechanism is a key component of DP, an intuitive solution is to bind the independent variable of the perturbation function in a specific range.
As such, the fundamental difference between the proposed composite DP mechanism and traditional mechanisms lies in whether the independent variable of the perturbation function is bounded. Additionally, the proposed perturbation function should also ensure statistically unbiased output for the perturbation, \ie satisfying~Equation~\eqref{equ:unbias}. 

To construct a perturbation function that satisfies boundedness and unbiasedness, we resort to composite and piece-wise functions due to their flexibility. The combination of these functions can provide a variety of perturbation functions that meet the requirements of various use cases. We design a perturbation function that consists of an activation function $H(x)$ and a base function $G(x)$. In the following, we first introduce the formal definition of the Composite DP Mechanism. 

\begin{definition}[\textbf{Composite Differentially Private Mechanism}]\label{def-mechanism}
A randomized composite mechanism $\mathcal{M}_c: \mathcal{\tilde D} \rightarrow \mathcal{N}$ with domain $\mathcal{\tilde D}$  and range $\mathcal{N} = \{c \in \mathbb{R}|l\leq c \leq u\}$, for any database $D$, query $f(\cdot)$, is defined as
\begin{equation}
\mathcal{M}_c(D, f(\cdot), \epsilon, l, u) = {\rm{Comp}}(f(D), \Delta f, \epsilon, l, u), 
\end{equation}
where $ {\rm{Comp}}(f(D), \Delta f, \epsilon, l, u)$ denotes the independent and identically distributed (i.i.d) random variables drawn from the  distribution, and $\Delta f$ is the $\ell_1$-sensitivity of $f(\cdot)$.
\end{definition}

The composite differentially private mechanism involves three main steps. Firstly, a mapping procedure is employed to rescale range $\mathcal{N} = \{c \in \mathbb{R}|l\leq c \leq u\}$ to domain~ $\mathcal{L} = \{x \in \mathbb{R} |-L\leq x \leq L\}$. This process is similar to the clipping technique in previous works~\cite{abadi2016deep,wang2019collecting,5,23}. Secondly, the output of the mapping is perturbed by a perturbation function defined over $\mathcal{L}$. Thirdly, an inverse mapping function is used to map the perturbed results from $\mathcal{L}$ back into $\mathcal{N}$. In the following, we provide a detailed explanation of them, respectively. 

We first introduce some notations before stating the mapping process. 
Let $[\cpmin, \cpmax]$ be a valid input range within $\mathcal{L}$. The values of $\cpmin$ and $\cpmax$ are determined by different shapes of the perturbation function, the detailed derivations of which are provided in Appendix~\ref{appendix-A}.   
The $\mathlower$ and $\mathupper$ bounds of the output space of $f$, denoted by $l$ and $u$, can be chosen manually according to the datasets' nature. For instance, when querying a dataset for the maximum age of teenagers, $l$ can be set as 10, or alternatively, users can set $u$ as 30. The choice of these values, influenced by the nature of the dataset, is independent of the input data.

\begin{remark}
    The mapping from real space $\mathcal{N}$ to domain~$\mathcal{L}$ is characterized as a linear transformation. Explicitly, with the $\mathlower$ and $\mathupper$ bounds $l$ and $u$ of $\mathcal{N}$, and the $\mathlower$ and $\mathupper$ bounds $-L$ and $L$ of $\mathcal{L}$, through mapping the sensitivity $\Delta f$ in range $\mathcal{N}$ to the corresponding sensitivity in domain $\mathcal{L}$, the association between the $\mathlower$ and $\mathupper$ can be established by letting $u = l + \frac{2L}{\cpmax - \cpmin} \Delta f$. Either $l$ or $u$ can be utilized to conduct the mapping process. The mapping function is defined as follows.
\end{remark}

\begin{definition}[\textbf{Mapping Function}]
Given the $\ell_1$-sensitivity $\Delta f$,  $\cpmin$, $\cpmax$, $l$ and $u$, for any input $c\in \mathcal{N}$, the mapping function $\gamma: \mathcal{N} \rightarrow \mathcal{L} = \{x \in \mathbb{R} |-L\leq x \leq L\}$ is defined as
\begin{equation}\label{map_fun_l}
\begin{split}
    \gamma(c) = (c - l)C - L,
\end{split}
\end{equation}
where $C = \frac{\cpmax - \cpmin}{\Delta f}$.
By substituting $u = l+ \frac{2L}{\cpmax-\cpmin} \Delta f$ into \eqref{map_fun_l}, we further have
\begin{equation}
    \gamma(c) = (c - u)C+L.
\end{equation}
\end{definition}

\begin{remark} 
The mapping function facilitates the enhancement of data consistency, enabling users to have better control over perturbation. Moreover, a unified domain simplifies the process of setting constraints based on the specific scenarios. All relevant parameters are solely dependent on the query $f$, $\epsilon$ and the determination of the perturbation function, thereby ensuring the compliance of privacy.
\end{remark}

In domain $\mathcal{L}$, the output of the mapping from the real space $\mathcal{N}$ is perturbed to protect data privacy. The perturbation function is defined as follows.
\begin{definition}[\textbf{Perturbation Function}]\label{def-perturbation}
For any two neighboring datasets $D, D^\prime \in \mathcal{\tilde D}$ and query $f$, and an activation function $H(x)$ and a base function $G(x)$ with a bounded independent variable $x\in \mathcal{L}$, we call the perturbation function of $\mathcal{M}_c,$ $P(x) \coloneqq H(x)+G(x)$, if the following are true:    
\begin{itemize}
    \item To satisfy unbiased output, the expectation output of perturbation function $P$ should be equal to~$C_p$, i.e., $E(x) = C_p$, where $C_p$ denotes the mapped input data corresponding to $f(D)$. 

    \item 
    To satisfy differential privacy guarantee, $\forall x, x' \in \mathcal{L}$, the corresponding perturbation functions $P(x)$ and $P'(x')$ should satisfy $\frac{\Sup_{x\in \mathcal{L}}~P(x)}{\Inf_{x^\prime\in \mathcal{L}}~P'(x')} \leq e^{\epsilon}$, where  $P$ and $P'$ denote the perturbation function associated with $C_p$ and $C'_p$, respectively, and $C_p, C'_p \in \mathcal{L}$ denote the mapped input data associated with $f(D)$ and $f(D^\prime)$, respectively.  
\end{itemize}
\end{definition}

\begin{remark}
    It is noted that Definition~\ref{def-perturbation} constructs the perturbation function for any input data to be within domain~$\mathcal{L}$. Without loss of generality, this definition focuses on the mechanism's output in a one-dimensional space defined over $\mathbb{R}$. 
Nonetheless, extending this definition to higher dimensional spaces is straightforward and can be readily achieved. 
\end{remark}

After perturbation, an inverse mapping function is used to map the perturbed results from $\mathcal{L}$ back into $\mathcal{N}$. The definition of the inverse mapping function is as follows. 
\begin{definition}[\textbf{Inverse Mapping Function}]
\label{def:inverse}
$\forall x \in \mathcal{L}$,
an inverse mapping function $\gamma^{-1} : \mathcal{L} \rightarrow \mathcal{N}$ is defined as
\begin{equation}\label{map_inverse_fun_l}
    \gamma^{-1}(x) = \frac{x+L}{C}+l,
\end{equation}
which can also be written as
\begin{equation}
    \gamma^{-1}(x) = \frac{x-L}{C}+u.
\end{equation}
\end{definition}

The proposed composite differentially private mechanism could provide unbiased perturbation while ensuring DP, as elaborated in the following two theorems.
\begin{theorem}[\textbf{Unbiased Perturbation}]\label{unbiased_theorem}
The proposed Composite Differentially Private Mechanism is an unbiased perturbation.
\end{theorem}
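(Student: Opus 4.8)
The plan is to show that the full end-to-end output of $\mathcal{M}_c$ has expectation exactly $f(D)$, which is precisely the bounded unbiasedness condition~\eqref{equ:unbias} of Definition~\ref{def:unbias}. The key observation is that the perturbation function is already calibrated so that $E(x)=C_p$ over the domain~$\mathcal{L}$ by the first requirement of Definition~\ref{def-perturbation}, while the mapping $\gamma$ and its inverse $\gamma^{-1}$ are both affine; since taking expectations commutes with affine transformations, unbiasedness measured in $\mathcal{L}$ transports back to unbiasedness in the real space~$\mathcal{N}$. Thus the proof is in essence a verification that the affine inverse map does not disturb the carefully arranged first moment.

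First I would express the expected output as $E[\mathcal{M}_c(D)] = \int_{-L}^{L} P(x)\,\gamma^{-1}(x)\,dx$, since the realized output is $\gamma^{-1}(x)$ whenever the perturbation over $\mathcal{L}$ returns $x$ and $P$ is its density. Substituting $\gamma^{-1}(x) = \frac{x+L}{C}+l$ from Definition~\ref{def:inverse} and splitting by linearity yields a term $\frac{1}{C}\int_{-L}^{L} P(x)\,x\,dx$ together with a term proportional to $\int_{-L}^{L} P(x)\,dx$. The latter integral equals $1$ because $P$ is a probability density over its support $[-L,L]$, and the former equals $C_p$ by the unbiased requirement of Definition~\ref{def-perturbation}, collapsing the expectation to the closed form $\frac{C_p+L}{C}+l$. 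I would then unwind $C_p$ through the mapping: since $C_p=\gamma(f(D))=(f(D)-l)C-L$ by~\eqref{map_fun_l}, we get $C_p+L=(f(D)-l)C$, and the expression simplifies to $\frac{(f(D)-l)C}{C}+l=f(D)$, establishing the claim.

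I do not anticipate a deep obstacle; the main care is bookkeeping. The subtle point is matching the two integrals above to the two standing facts about $P$ — that it integrates to one over the full output support $[-L,L]$, and that its first moment equals the mapped input $C_p$ rather than the raw input $f(D)$ or any value tied to the valid input window $[\cpmin,\cpmax]$. Once these are correctly identified, the affine invariance of expectation makes the conclusion immediate. Notably, the argument uses no property of the specific activation function $H$ or base function $G$ beyond $E(x)=C_p$ and normalization, so it holds uniformly for every admissible perturbation function.
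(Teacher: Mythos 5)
Your proposal is correct and follows essentially the same route as the paper's proof: establish $E(x)=C_p$ over $\mathcal{L}$ from Definition~\ref{def-perturbation} and then transport this through the affine inverse map $\gamma^{-1}$ back to $f(D)$. You are in fact slightly more explicit than the paper, which simply invokes ``linearity of $\gamma^{-1}$'' and asserts $\gamma^{-1}(C_p)=f(D)$, whereas you spell out the normalization $\int_{-L}^{L}P(x)\,dx=1$ needed to handle the constant term of the affine map and verify $\gamma^{-1}(C_p)=f(D)$ directly from~\eqref{map_fun_l}.
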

\begin{proof}
According to Definition~\ref{def-perturbation}, let $P$ denote the perturbation function corresponding to $C_p \in \mathcal{L}$. Given a dataset $D$, the raw query result $f(D)$, an activation function $H(x)$ and a base function $G(x)$. 
According to Definition~\ref{def:inverse}, for $\forall c \in \mathcal{N}$, we have $c = \gamma^{-1}(x)$, where $x \in \mathcal{L}$. Therefore,
\begin{subequations}\label{eq_unbiased}
\begin{align}
    E(c) & = E \left(\gamma^{-1}(x) \right) \label{eq_unbiased a} \\
        & = \gamma^{-1}(E(x))  \label{eq_unbiased b}\\
        &=  \gamma^{-1}\left(\int_{-L}^{L} P(x)x dx\right) \label{eq_unbiased c}\\
        & =  \gamma^{-1}\left(\int_{-L}^{L} H(x)x dx + \int_{-L}^{L} G(x)x dx\right) \label{eq_unbiased d}\\
        & =  \gamma^{-1}(C_p) \label{eq_unbiased e}\\
        & =  f(D), \label{eq_unbiased f}
\end{align} 
\end{subequations}
where 
\eqref{eq_unbiased b} is due to the linearity of the inverse mapping function $\gamma^{-1}$, and \eqref{eq_unbiased f} is obtained by inverse mapping the perturbed output $C_p$ into $f(D) \in \mathcal{N}$, \ie $ \gamma^{-1}(C_p) = f(D)$.
\end{proof}

\begin{theorem}[\textbf{$\epsilon$-DP}]\label{DP_theorem}
The proposed Composite DP mechanism preserves~{$\epsilon$-DP}.
\end{theorem}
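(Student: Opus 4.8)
The plan is to reduce the claim to the pointwise privacy condition already imposed on the perturbation function in the second bullet of Definition~\ref{def-perturbation}, using the fact that the mapping $\gamma$ and its inverse $\gamma^{-1}$ are deterministic linear bijections and therefore incur no privacy loss. I would begin from the $\epsilon$-DP definition \eqref{equ:epdeldp} and fix an arbitrary measurable output set $S \subseteq \mathcal{N}$. Since all randomness in $\mathcal{M}_c$ enters through the perturbation step over the domain $\mathcal{L}$, and the final output is $c = \gamma^{-1}(x)$ for the perturbed value $x$, the event $\{\mathcal{M}_c(D) \in S\}$ pulls back to $\{x \in \gamma(S)\}$. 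Because $\gamma^{-1}$ is a bijection, this lets me write $\Pr[\mathcal{M}_c(D) \in S] = \int_{\gamma(S)} P(x)\,dx$, where $P$ is the perturbation function associated with the mapped input $C_p = \gamma(f(D))$, and likewise $\Pr[\mathcal{M}_c(D') \in S] = \int_{\gamma(S)} P'(x)\,dx$ with $C'_p = \gamma(f(D'))$. No Jacobian factor survives, because the event is pulled back to the \emph{same} set $\gamma(S)$ in noise space for both databases.

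The second step is to turn the ratio condition of Definition~\ref{def-perturbation} into a pointwise density bound. For neighboring $D, D'$ we have $\|f(D)-f(D')\|_1 \leq \Delta f$, so by linearity of $\gamma$ the mapped inputs obey $|C_p - C'_p| = C\,\|f(D)-f(D')\|_1 \leq C\,\Delta f = \cpmax - \cpmin$, placing the pair $(C_p, C'_p)$ within the admissible separation so that the Definition~\ref{def-perturbation} bound $\frac{\Sup_{x\in\mathcal{L}} P(x)}{\Inf_{x'\in\mathcal{L}} P'(x')} \leq e^{\epsilon}$ applies. Chaining $P(x) \leq \Sup_{x\in\mathcal{L}} P(x)$ with $\Inf_{x'\in\mathcal{L}} P'(x') \leq P'(x)$ then yields the uniform pointwise inequality $P(x) \leq e^{\epsilon} P'(x)$ for every $x \in \mathcal{L}$. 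Integrating this over $\gamma(S)$ gives $\int_{\gamma(S)} P(x)\,dx \leq e^{\epsilon} \int_{\gamma(S)} P'(x)\,dx$, which is exactly $\Pr[\mathcal{M}_c(D) \in S] \leq e^{\epsilon} \Pr[\mathcal{M}_c(D') \in S]$. Since $S$ was arbitrary, $\mathcal{M}_c$ satisfies $\epsilon$-DP.

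The step I expect to be the main obstacle is justifying that the sup/inf ratio bound from Definition~\ref{def-perturbation} genuinely dominates the pointwise ratio \emph{uniformly} in $x$, together with the sensitivity bookkeeping that keeps both perturbation functions in the regime where that bound is guaranteed. In particular, one must confirm that transporting the $\ell_1$-sensitivity $\Delta f$ through $\gamma$ produces precisely the spread $\cpmax - \cpmin$ used to delimit the valid input range, so that any neighboring pair $(C_p, C'_p)$ is admissible; the remaining ingredients---linearity of $\gamma$ and $\gamma^{-1}$, the pull-back of $S$, and monotonicity of the integral---are routine and can be stated without detailed computation.
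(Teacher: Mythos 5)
Your proposal is correct and follows essentially the same route as the paper's proof: both reduce the claim to the sup/inf ratio condition in the second bullet of Definition~\ref{def-perturbation}, bounding the density of $\mathcal{M}_c(D)$ above by $\Sup_{x\in\mathcal{L}}P(x)$ and that of $\mathcal{M}_c(D')$ below by $\Inf_{x'\in\mathcal{L}}P'(x')$ over the common pulled-back output set before integrating. Your version is somewhat more careful than the paper's about the change of variables through $\gamma^{-1}$ and about verifying that neighboring inputs stay within the admissible separation $\cpmax-\cpmin$, but these are refinements of the same argument rather than a different approach.
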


\begin{proof}
Given any two neighboring databases $D, D'$ and the query~$f$, 
for any subset of outputs $S \subseteq \mathcal{N}$,  
we have
\begin{subequations}\label{eq-dp proof}
\begin{align}
    & \frac{\Pr(\mathcal{M}_c(D) \in S)}{\Pr( \mathcal{M}_c(D^\prime) \in S)} \\
    & = \frac{P(v|\gamma^{-1}(v) \in S)}{P'(v'|\gamma^{-1}(v') \in S)} \label{eq-dp proof b}\\
    & =  \frac{H(v|\gamma^{-1}(v) \in S)+G(v|\gamma^{-1}(v) \in S)}{H'(v'|\gamma^{-1}(v') \in S)+G(v'|\gamma^{-1}(v') \in S)} \\
    & =  \frac{\int_{u \in S}\left[ H(v|\gamma^{-1}(v)=u )+G(v|\gamma^{-1}(v) =u )\right]du}{\int_{u \in S}\left[ H'(v'|\gamma^{-1}(v') =u )+G(v'|\gamma^{-1}(v') =u )\right]du } \\
    & \leq 
     \frac{\int_{u \in S} \displaystyle\Sup_{x\in \mathcal{L}}~P(x) du}
    {\int_{u \in S} \displaystyle\Inf_{x^\prime\in \mathcal{L}}~P'(x') du}  = \frac{\displaystyle\Sup_{x\in \mathcal{L}}~P(x)\cdot\int_{u \in S}1 du}
    { \displaystyle\Inf_{x^\prime\in \mathcal{L}}~P'(x') \cdot\int_{u \in S}1 du}
    \\
    & \leq e^{\epsilon},
\end{align}
\end{subequations}
where $v$ and $v'$ are the perturbed outcomes of $P$ and $P'$, respectively, and \eqref{eq-dp proof b} is obtained by letting $\gamma^{-1}(v) \in S$ and $\gamma^{-1}(v')\in S$. Here, for any point $u \in S$, $\gamma^{-1}(v) =u $ and $\gamma^{-1}(v') = u $ indicate inverse mapping of $v$ and $v'$ in domain $\mathcal{L}$ back into $u \in S$ in real space $\mathcal{N}$.

\end{proof}

\subsection{Perturbation Functions in Construction}
\label{sec:perfunc}

As we recall Definition~\ref{def-perturbation}, the perturbation function is constructed by the combination of the activation function and base function. 
The activation function is designed to adjust the expectation value of the perturbation function so as to satisfy the unbiasedness, while the base function remains constant for all input data, creating indistinguishable areas with a bounded range. 

\begin{figure*}[t]
\centering
\begin{minipage}[t]{0.55\linewidth}
\includegraphics[width=\linewidth]{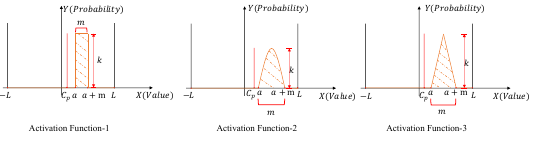}
\caption{Examples of proposed activation functions.}
\label{fig_act_fun}
\end{minipage}\quad\quad
\begin{minipage}[t]{0.4\linewidth}
\includegraphics[width=\linewidth]{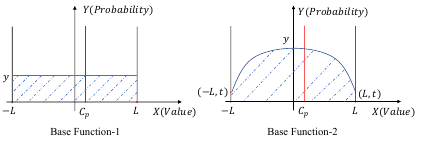}
\caption{Examples of proposed base functions.}
\label{fig_base_fun}
\end{minipage}
\end{figure*}

\noindent \textbf{Activation Functions.~}\label{subsection_act}
To satisfy unbiased output, it requires that the expectation of a perturbation function equals to the input data of the mechanism (Definition~\ref{def-perturbation}). Since the valid range of independent variables in traditional DP mechanisms (\eg Laplace and Gaussian) is unbounded, the unbiased criteria could be satisfied as the perturbation functions are always symmetric with respect to $C_p$. However, with a bounded output space (being restricted to domain $\mathcal{L}$), it is infeasible to construct the perturbation function in the same way since $C_p$ is not necessary to be at the center of domain $\mathcal{L}$. Therefore, we use an activation function to adjust the expectation of the output. Concretely, the activation function is a piece-wise function composed of several parts. In this section, we describe three general examples of proposed activation functions (Figure~\ref{fig_act_fun}).

\noindent \textbf{Activation Function-1 (\texttt{A1}).} 
Given $C_p, a \in \mathcal{L}$  determined by $C_p$, $k\in [0,1)$, $m \geq 0, L > 0$, $\forall x \in \mathcal{L}$, activation function $H(x)$ is as follows
\begin{equation}
    H(x) = 
\begin{cases} 
    0 \quad & \text{\textnormal{if }} x\in[-L,a), \\
    k \quad & \text{\textnormal{if }} x\in[a,a+m), \\
    0 \quad & \text{\textnormal{if }} x\in[a+m,L],
\end{cases}
\end{equation}
with the integral of \texttt{A1} for $x \in \mathcal{L}$ being
\begin{equation}
    S_1 = \int_{-L}^{L} H(x) dx = \int_{a}^{a+m} k dx = km.
\end{equation}

\noindent \textbf{Activation Function-2 (\texttt{A2}).}
Given $C_p, a \in \mathcal{L}$ determined by $C_p$, $k\in [0,1)$, $m \geq 0, L > 0$, $\forall x \in \mathcal{L}$, activation function $H(x)$ is as follows
\begin{equation}
    H(x) = 
\begin{cases} 
    0 \quad & \text{\textnormal{if }} x\in[-L,a), \\
    k\mathrm{sin}\frac{\pi}{m}(x-a) \quad & \text{\textnormal{if }} x\in[a,a+m), \\
    0 \quad & \text{\textnormal{if }} x\in[a+m,L],
\end{cases}
\end{equation}
with the integral of \texttt{A2} for $x \in \mathcal{L}$ being
\begin{equation}
    S_1 = \int_{-L}^{L} H(x) dx = \int_{a}^{a+m} k\mathrm{sin}\frac{\pi}{m}(x-a) dx = \frac{2km}{\pi}.
\end{equation}

\noindent \textbf{Activation Function-3 (\texttt{A3}).} 
Given $C_p, a \in \mathcal{L}$ determined by $C_p$, $k\in [0,1)$, $m \geq 0, L > 0$, $\forall x \in \mathcal{L}$, activation function $H(x)$ is as follows
\begin{equation}
    H(x) = 
\begin{cases} 
    0 \quad & \text{\textnormal{if }} x  \in[-L,a), \\
    \frac{2k}{m}x-\frac{2ak}{m} \quad & \text{\textnormal{if }}  x \in[a,a+\frac{m}{2}), \\
    -\frac{2k}{m}x+\frac{2(a+m)k}{m} \quad & \text{\textnormal{if }} x \in[a+\frac{m}{2},a+m), \\
    0 \quad & \text{\textnormal{if }} x  \in[a+m,L],
\end{cases}
\end{equation}
with the integral of \texttt{A3} for $x \in \mathcal{L}$ being
\begin{subequations}
\begin{align}
    S_1 & = \int_{-L}^{L} H(x) dx \\ 
    & = \int_{a}^{a+\frac{m}{2}} \frac{2k}{m}x - \frac{2ak}{m} dx \! +\! \int_{a+\frac{m}{2}}^{a+m} \!-\frac{2k}{m}x + \frac{2(a+m)k}{m} dx \\ 
    & = \frac{km}{2}.
\end{align}
\end{subequations}

As shown in the construction, the activation function only has probabilistic outputs within the range of $[a, a+m]$. As such, we can adjust the expected output of the final perturbation by tuning the parameter $a$ for any given $C_p \in \mathcal{L}$, 
thereby ensuring the unbiasedness of the proposed mechanism.

\noindent \textbf{Base Functions.~}\label{subsection_base}
The base function remains constant for all input data, maintaining an unaltered form, and it yields a positive output for all independent variables. 
As required in Definition~\ref{def-perturbation}, the ratio of the maximum and minimum output probability of the perturbation functions with respect to any $x$ and $x^\prime$ should be bounded by $e^{\epsilon}$. 
However, traditional Laplace and Gaussian functions do not satisfy this requirement as the infimum of their perturbation functions is zero. 
As such, in the following constructions, we use the base functions to bound the minimum output probability to enable the privacy guarantee, \ie $\Inf_{x \in \mathcal{L}} G(x) > 0$.  

Figure~\ref{fig_base_fun} shows two examples of base functions, and their definitions are provided as follows. 

\noindent \textbf{Base Function-1 (\texttt{B1}).}
Given $L>0, y>0$, $\forall x \in \mathcal{L}$, base function $G(x)$ is as follows:
\begin{equation}
    G(x) = y, \quad x\in \mathcal{L}, 
\end{equation}
with the integral of \texttt{B1} for $x \in \mathcal{L}$ being
\begin{equation}
    S_2 = \int_{-L}^{L} G(x) dx = 2yL.
\end{equation}

\noindent \textbf{Base Function-2 (\texttt{B2}).}
Given $L>0, y>0, t>0$, $\forall x \in \mathcal{L}$, base function $G(x)$ is as follows
\begin{equation}
    G(x) = -\frac{y-t}{L^4} x^4 + y, \quad x\in \mathcal{L}, 
\end{equation}
with the integral of \texttt{B2} for $x \in \mathcal{L}$ being
\begin{equation}
    S_2 = \int_{-L}^{L} G(x) dx = \frac{2L(t-y)}{5}+2Ly.
\end{equation}

\noindent \textbf{Perturbation Functions.~}\label{subsection_perturbation}
With these activation and base functions, we can further
construct six distinct perturbation functions following Definition~\ref{def-perturbation}. Each perturbation function is composed of an activation function $H(x)$, and a base function $G(x)$, as demonstrated in Figure~\ref{fig_perturbation_function}. Due to the inherent characteristics of the perturbation function (with its maximum value constrained not to exceed $1$, and the integral value over the valid range of the independent variable being $1$), each perturbation function is subject to four constraints (as detailed in Appendix~\ref{appendix-A}) while simultaneously satisfying the conditions of unbiasedness and DP guarantee, as detailed in Table~\ref{table1}. Only perturbation functions that meet these constraints are considered viable.

\begin{figure*}[t]
\centering
\captionof{table}{Six proposed perturbation functions, each comprises an activation function $H(x)$ and a base function $G(x)$. The perturbation function has specific characteristics, including its maximum value bounded by one and an integral value over the valid range of the independent variable that equals one. Moreover, all the perturbation functions are constructed to satisfy the DP guarantee. Finally, the variances of these proposed perturbation functions are calculated and listed in the last row. }\label{table1}
\includegraphics[width=\linewidth]{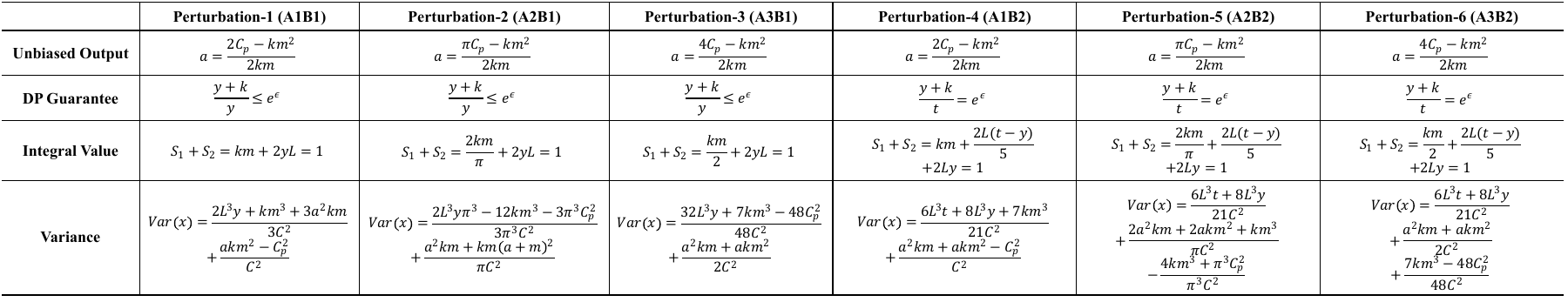}
\end{figure*}

\begin{figure*}[t]
\centering
\includegraphics[width=\textwidth]{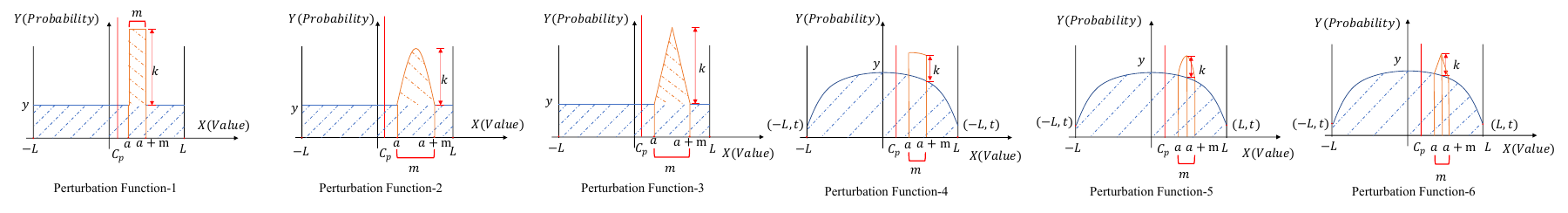}
\caption{Six instances of perturbation functions.}\label{fig_perturbation_function}
\end{figure*}

\noindent \textbf{Perturbation Function-1 (\texttt{A1B1}).}
Given $C_p, a \in \mathcal{L}$ determined by $C_p$, $k\in [0,1)$, $m \geq 0, L>0, y>0$, $\forall x \in \mathcal{L}$, perturbation function $P(x)$ is as follows
\begin{equation}\footnotesize
    P(x) = 
\begin{cases} 
    y \quad & \text{\textnormal{if }} x\in[-L,a), \\
    y+k \quad & \text{\textnormal{if }} x\in[a,a+m), \\
    y \quad & \text{\textnormal{if }} x\in[a+m,L].
\end{cases}
\end{equation}

\noindent \textbf{Perturbation Function-2 (\texttt{A2B1}).}
Given $C_p, a \in \mathcal{L}$ determined by $C_p$, $k\in [0,1)$, $m \geq 0, L>0, y>0$, $\forall x \in \mathcal{L}$, perturbation function $P(x)$ is as follows
\begin{equation}\footnotesize
    P(x) = 
\begin{cases} 
    y \quad & \text{\textnormal{if }} x\in[-L,a), \\
    y+k \mathrm{sin} \frac{\pi}{m}(x-a) \quad & \text{\textnormal{if }} x\in[a,a+m), \\
    y \quad & \text{\textnormal{if }} x\in[a+m,L].
\end{cases}
\end{equation}

\noindent \textbf{Perturbation Function-3 (\texttt{A3B1}).}
Given $C_p, a \in \mathcal{L}$ determined by $C_p$, $k\in [0,1)$, $m \geq 0, L>0, y>0$, $\forall x \in \mathcal{L}$, perturbation function $P(x)$ is as follows
\begin{equation}\footnotesize
    P(x) = 
\begin{cases} 
    y \quad & \text{\textnormal{if }} x\in[-L,a), \\
    y + \frac{2k}{m}x - \frac{2ak}{m} \quad & \text{\textnormal{if }} x\in[a,a+\frac{m}{2}), \\
    y - \frac{2k}{m}x + \frac{2(a+m)k}{m} \quad & \text{\textnormal{if }} x\in[a+\frac{m}{2},a+m), \\
    y \quad & \text{\textnormal{if }} x\in[a+m,L].
\end{cases}
\end{equation}

\noindent \textbf{Perturbation Function-4 (\texttt{A1B2}).}
Given $C_p, a \in \mathcal{L}$ determined by $C_p$, $k\in [0,1)$, $m \geq 0, L>0, y>0, t=\frac{y+k}{e^{\epsilon}}$, $\forall x \in \mathcal{L}$, perturbation function $P(x)$ is as follows
\begin{equation}\footnotesize
    P(x) = 
\begin{cases} 
    -\frac{y-t}{L^4}x^4 + y \quad & \text{\textnormal{if }} x\in[-L,a), \\
    -\frac{y-t}{L^4}x^4 + y + k \quad & \text{\textnormal{if }} x\in[a,a+m), \\
    -\frac{y-t}{L^4}x^4 + y \quad & \text{\textnormal{if }} x\in[a+m,L].
\end{cases}
\end{equation}

\noindent \textbf{Perturbation Function-5 (\texttt{A2B2}).}
Given $C_p, a \in \mathcal{L}$ determined by $C_p$, $k\in [0,1)$, $m \geq 0, L>0, y>0, t=\frac{y+k}{e^{\epsilon}}$, $\forall x \in \mathcal{L}$, perturbation function $P(x)$ is as follows
\begin{equation}\footnotesize
    P(x) = 
\begin{cases} 
    -\frac{y-t}{L^4}x^4 + y \quad & \text{\textnormal{if }} x\in[-L,a), \\
    -\frac{y-t}{L^4}x^4 + y + k \mathrm{sin} \frac{\pi}{m}(x-a) \quad & \text{\textnormal{if }} x\in[a,a+m), \\
    -\frac{y-t}{L^4}x^4 + y \quad & \text{\textnormal{if }} x\in[a+m,L].
\end{cases}
\end{equation}

\noindent \textbf{Perturbation Function-6 (\texttt{A3B2}).}
Given $C_p, a \in \mathcal{L}$ determined by $C_p$, $k\in [0,1)$, $m \geq 0, L>0, y>0, t=\frac{y+k}{e^{\epsilon}}$, $\forall x \in \mathcal{L}$, perturbation function $P(x)$ is as follows
\begin{equation}\footnotesize
    P(x) = 
\begin{cases} 
    -\frac{y-t}{L^4}x^4 + y \quad & \text{\textnormal{if }} x\in[-L,a), \\
    -\frac{y-t}{L^4}x^4 + y + \frac{2k}{m}x - \frac{2ak}{m} \quad & \text{\textnormal{if }} x\in[a,a+\frac{m}{2}), \\
    -\frac{y-t}{L^4}x^4 + y - \frac{2k}{m}x + \frac{2(a+m)k}{m} \quad & \text{\textnormal{if }} x\in[a+\frac{m}{2},a+m), \\
    -\frac{y-t}{L^4}x^4 + y \quad & \text{\textnormal{if }} x\in[a+m,L].
\end{cases}
\end{equation}

These six perturbation functions proposed in this work are merely illustrative examples. In practical applications, users can design perturbation functions following Definition~\ref{def-perturbation} to satisfy the criteria of boundedness and unbiasedness, and to ensure the privacy guarantee, as indicated in Theorems~\ref{unbiased_theorem} and \ref{DP_theorem}. To put it differently, these two theorems provide users with considerable flexibility in formulating the perturbation functions they utilize, enabling them to tailor their solutions to meet the specific demands of real-world scenarios.  This characteristic represents a central advantage of the composite DP mechanism.

\subsection{Privacy Verification of the Proposed Perturbation Functions}\label{sec_proof_DP}
In this section, we verify that the proposed perturbation functions satisfy $\epsilon$-differential privacy. 

\begin{corollary}
The proposed perturbation functions of the Composite DP mechanism preserve $\epsilon$-DP.
\end{corollary}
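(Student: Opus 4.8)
The plan is to invoke Theorem~\ref{DP_theorem}, which has already reduced the $\epsilon$-DP guarantee to a single analytic inequality from Definition~\ref{def-perturbation}: it suffices to show that for each of the six perturbation functions,
$$\frac{\Sup_{x\in \mathcal{L}}~P(x)}{\Inf_{x^\prime\in \mathcal{L}}~P'(x')}\leq e^{\epsilon},$$
where $P$ and $P'$ are the functions attached to arbitrary mapped inputs $C_p,C_p'$, \ie to arbitrary admissible activation offsets $a,a'$. The corollary is therefore a finite verification, and rather than computing the exact supremum and infimum I would work with uniform pointwise bounds on $P=G+H$, which hold for every offset and thus cover all neighboring pairs at once.

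The first step is a structural remark shared by the three activation functions: each $H$ is nonnegative and attains maximum value exactly $k$ (immediate for \texttt{A1}; at the sine crest for \texttt{A2}; at the triangle apex for \texttt{A3}). Hence $0\le H(x)\le k$ for all $x\in\mathcal{L}$, and combining with the bounds on the base yields $\Inf_{x'}G\le P'(x')$ and $P(x)\le \big(\Sup_x G\big)+k$ pointwise. The ratio condition then follows from $\frac{(\Sup_x G)+k}{\Inf_{x'}G}\le e^{\epsilon}$, so the six cases collapse onto the two base functions.

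For \texttt{A1B1}, \texttt{A2B1} and \texttt{A3B1} the base is the constant $y$, giving $\Sup_x G=\Inf_{x'}G=y$; the requirement becomes $\frac{y+k}{y}\le e^{\epsilon}$, which is exactly the parameter inequality recorded among the admissibility constraints in Appendix~\ref{appendix-A} and therefore holds by construction. For \texttt{A1B2}, \texttt{A2B2} and \texttt{A3B2}, I would use that $G(x)=-\tfrac{y-t}{L^4}x^4+y$ with $y>t>0$ is even and unimodal, so that $t=G(\pm L)\le G(x)\le G(0)=y$ on $\mathcal{L}$. The pointwise bounds then give $P'(x')\ge t$ and $P(x)\le y+k$, whence the ratio is at most $\frac{y+k}{t}$; substituting the prescribed value $t=\frac{y+k}{e^{\epsilon}}$ makes this exactly $e^{\epsilon}$, closing these three cases.

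The step I expect to demand the most care is justifying the base bounds for \texttt{B2} and confirming that the pointwise lower bound $G(x)\ge t$ genuinely holds across all of $\mathcal{L}$ rather than only at the two endpoints: this rests on the monotonicity of $x\mapsto x^4$ together with the sign condition $y>t>0$, which I would verify explicitly so that the quartic attains its minimum precisely at $x=\pm L$. Once that is in place, the fact that the activation bump can never depress $P$ below the base ensures the denominator is bounded below by $t$ uniformly in the offset $a'$, and the choice of $t$ transfers the entire burden of the privacy guarantee onto a single closed-form identity. The \texttt{B1} family is then recovered as the degenerate specialization $t=y$, completing the verification for all six functions.
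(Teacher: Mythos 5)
Your proposal is correct and follows essentially the same route as the paper's proof: both reduce the corollary to the ratio condition $\Sup_{x\in\mathcal{L}}P(x)/\Inf_{x'\in\mathcal{L}}P'(x')\le e^{\epsilon}$ from Definition~\ref{def-perturbation} via Theorem~\ref{DP_theorem}, bound the numerator by $y+k$ using $0\le H\le k$, and lower-bound the denominator by the base function's minimum, splitting only over \texttt{B1} versus \texttt{B2}. The one substantive difference is in the \texttt{B2} case: you lower-bound $P'$ by $t=G(\pm L)$ and close the argument with the identity $t=(y+k)/e^{\epsilon}$, which is the tight and internally consistent bound, whereas the paper instead asserts $G(x)\ge y$ (which presumes $t\ge y$ while its upper bound $\sup P\le y+k$ presumes $t\le y$) and concludes with the looser ratio $(y+k)/y$; your caveat that one must verify $y>t>0$ so the quartic attains its minimum at $x=\pm L$ is exactly the point the paper glosses over, and your handling of it is the more careful of the two.
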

\begin{proof}
Given that $P$ and $P^\prime$ are the perturbation functions of $\mathcal{M}_c(D)$ and $\mathcal{M}_c(D^\prime)$, $H$ and $H'$ are activation functions of $P$ and $P'$, and $G$ is the base function. Let $v, v' \in \mathcal{N}$ denote the perturbed results of $P$ and $P'$. 
When using the base function \texttt{B1} with any of the three activation functions (\ie \texttt{A1} to \texttt{A3}), for any subset of outputs $S \subseteq \mathcal{N}$, we have
\begin{subequations}
\small
\begin{align}
    & \frac{\Pr(\mathcal{M}_c(D) \in S)}{\Pr(\mathcal{M}_c(D') \in S)} \notag\\
    & = \frac{P(v|\gamma^{-1}(v) \in S)}{P'(v'|\gamma^{-1}(v') \in S)} \notag\\  
    & = \frac{H(v|\gamma^{-1}(v)\in S)+G(v|\gamma^{-1}(v) \in S )}{H'(v'|\gamma^{-1}(v') \in S)+G(v'|\gamma^{-1}(v') \in S)} \notag \\
    & = \frac{\int_{u \in S} H(v|\gamma^{-1}(v) = u )du+y\cdot \int_{u \in S}1 d u}{\int_{u \in S} H'(v'|\gamma^{-1}(v') =u)du+y\cdot \int_{u \in S}1 d u}. \notag
\end{align}
\end{subequations}
\normalsize
Since 
\small
\begin{align*}
    H(v|\gamma^{-1}(v)=u)+y &\leq k+y \\
    & = \displaystyle\Sup_{x\in \mathcal{L}}P(x), 
\end{align*}
\normalsize
and
\small
\begin{subequations}
\begin{align*}
    H'(v'|\gamma^{-1}(v')=u)+y &\geq y \\
    & = \displaystyle\Inf_{x^\prime\in \mathcal{L}}~P'(x').
\end{align*}
\end{subequations}
\normalsize
According to Definition~\ref{def-perturbation}, we have
\small
\begin{subequations}
\begin{align*}
\frac{P(v|\gamma^{-1}(v)\in S)}{P'(v'|\gamma^{-1}(v')\in S)} & \leq \frac{(y+k)\cdot \int_{u \in S}1 d u}{y\cdot \int_{u \in S}1 d u} \\
& = \frac{\displaystyle\Sup_{x\in \mathcal{L}}~P(x)}
    {\displaystyle\Inf_{x^\prime\in \mathcal{L}}~P'(x')} 
    \leq e^{\epsilon}.
\end{align*}
\end{subequations}

\normalsize
When using the base function \texttt{B2} with any of the three activation functions (\ie \texttt{A1} to \texttt{A3}), for any subset of outputs $S \subseteq \mathcal{N}$, we have
\begin{subequations}
\small
\begin{align}
    & \frac{\Pr(\mathcal{M}_c(D) \in S)}{\Pr(\mathcal{M}_c(D') \in S)}\notag \\
    & = \frac{P(v|\gamma^{-1}(v) \in S)}{P'(v'|\gamma^{-1}(v') \in S)} \notag\\ 
    & = \frac{H(v|\gamma^{-1}(v) \in S)+G(v|\gamma^{-1}(v) \in S)}{H'(v'|\gamma^{-1}(v') \in S)+G(v'|\gamma^{-1}(v') \in S)} \notag \\
    & = \frac{\int_{u \in S}\left[H(v|\gamma^{-1}(v) =u)-\frac{y-t}{L^4}(v|\gamma^{-1}(v) =u)^4+y \right]du}{\int_{u \in S}\left[ H'(v'|\gamma^{-1}(v') =u)-\frac{y-t}{L^4}(v'|\gamma^{-1}(v') =u)^4+y\right]du}. \notag 
\end{align}
\end{subequations}
\normalsize
Since
\begin{subequations}
\small
\begin{align*}
    & H(v|\gamma^{-1}(v)=u)-\frac{y-t}{L^4}(v|\gamma^{-1}(v)=u)^4+y \\
    & \leq k+y \\
    & = \displaystyle\Sup_{x\in \mathcal{L}}P(x), 
\end{align*}
\end{subequations}
\normalsize
and
\begin{subequations}
\small
\begin{align*}
    & H'(v'|\gamma^{-1}(v')=u)-\frac{y-t}{L^4}(v'|\gamma^{-1}(v')=u)^4+y \\
    &\geq H'(v'|\gamma^{-1}(v')=u) + y \\
    & \geq y \\
    & = \displaystyle\Inf_{x^\prime\in \mathcal{L}}~P'(x'), 
\end{align*}
\end{subequations}
\normalsize
According to Definition~\ref{def-perturbation}, we have

\begin{subequations}
\small
\begin{align*}
\frac{P(v|\gamma^{-1}(v)\in S)}{P'(v'|\gamma^{-1}(v')\in S)} & \leq \frac{(y+k)\cdot \int_{u \in S}1 d u}{y\cdot \int_{u \in S}1 d u} \\
& = \frac{\displaystyle\Sup_{x\in \mathcal{L}}~P(x)}
    {\displaystyle\Inf_{x^\prime\in \mathcal{L}}~P'(x')} 
    \leq e^{\epsilon}.
\end{align*}
\end{subequations}

It is now easy to verify general cases that go beyond the proposed six perturbation functions. With any activation and base functions, the composition of which is based on Definition~\ref{def-perturbation}, the resulting Composite DP mechanism preserves $\epsilon$-DP.
\end{proof}

\section{Composite Differentially Private Mechanism in Practice}

In this section, we demonstrate the application and implementation of the proposed composite differentially private mechanism in real query scenarios.

\subsection{The Overall Workflow of Composite DP}
\begin{figure}[t]
  \centering
  \includegraphics[width=\linewidth]{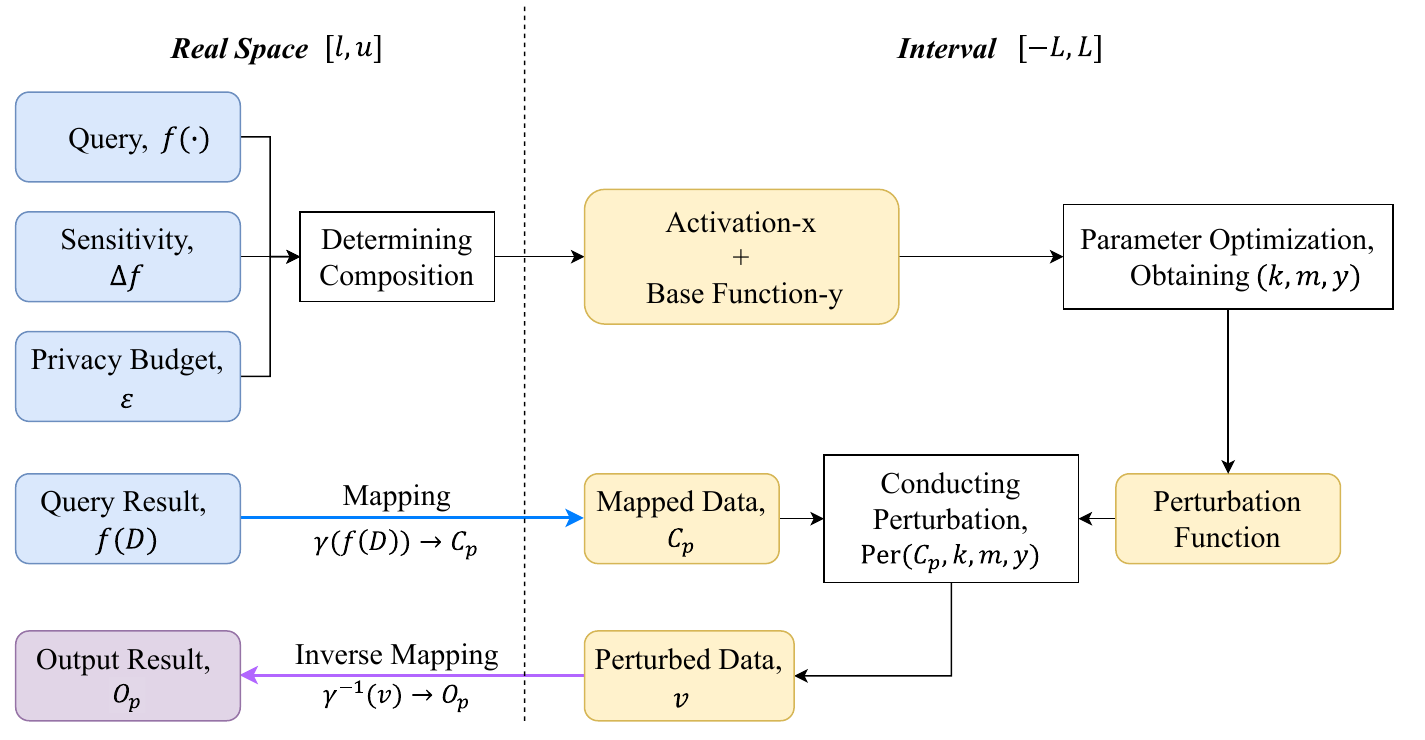}
  \caption{Workflow of the composite DP.}
  \label{Fig_workflow}
\end{figure}

In this section, we provide a comprehensive elucidation on its application to a real query scenario. The overall workflow of the composite differentially private mechanism is demonstrated in Figure~\ref{Fig_workflow}.

Given a query $f(\cdot)$, $\ell_1$-sensitivity value $\Delta f$, and privacy budget $\epsilon$, we first determine the composition by carefully choosing an activation function $H(x)$ and a base function~$G(x)$. Subsequently, we obtain the best parameter setup of $k, m, y$ according to the parameter optimization algorithm (detailed in the following Section~\ref{sec:opt}). Then we map the query result $f(D)$ from real space range~$\mathcal{N}$ into domain $\mathcal{L}$, and $C_p$ is the corresponding point to the $f(D)$. Afterward, we conduct the data perturbation given selected parameters $k, m, y$ and $C_p$. Following that, we inversely map the perturbed data $v$ back into real space $\mathcal{N}$. Finally, we publish the final result $O_{p}$. The mapping and inverse mapping functions are presented in Equations~\eqref{map_fun_l} and \eqref{map_inverse_fun_l}. 

\subsection{Parameter Optimization}
\label{sec:opt}

\begin{algorithm}[t]\footnotesize
\SetAlgoLined
\KwIn{steps $W$, stopping condition of iteration times $T$.}
\KwOut{Best parameter setup $k^{\mathrm{best}}, m^{\mathrm{best}}, y^{\mathrm{best}}$.}
$k,m,y \gets \randominit()$\;\label{alg_line_init}
$V_{\mathrm{min}} \gets \variance(k,m,y)$\;\label{alg_line_var_init}
$k^{\mathrm{best}},m^{\mathrm{best}},y^{\mathrm{best}} \gets k,m,y$\;
\For{$w \in W$}{\label{alg_line_loop_starts}
\While{$j \leq T$}{
$k,m,y \gets \modify(k,m,y,w)$\;\label{alg_line_modify}
$\mathrm{Var}\gets \variance(k,m,y)$\;
\If{$\mathrm{Var} < V_{\mathrm{min}}$}
{\label{alg_line_check_min}
$V_{\mathrm{min}} \gets \mathrm{Var}$\;
$k^{\mathrm{best}},m^{\mathrm{best}},y^{\mathrm{best}} \gets k,m,y$\;\label{alg_line_update_para}
}
$j\gets j+1$\;
}
}\label{alg_line_loop_ends}
\Return $k^{\mathrm{best}}, m^{\mathrm{best}}, y^{\mathrm{best}}$\;
\caption{Parameter Optimization (Enumeration)}
\label{alg_optimization}
\end{algorithm}

\begin{algorithm}[t]\footnotesize
\SetAlgoLined
\KwIn{step $w$, stopping condition of variance target $V_{\mathrm{t}}$.}
\KwOut{Best parameter set $k^{\mathrm{best}}, m^{\mathrm{best}}, y^{\mathrm{best}}$.}
$k,m,y \gets \randominit()$\;
$V_{\mathrm{min}} \gets \variance(k,m,y)$\;
$k^{\mathrm{best}},m^{\mathrm{best}},y^{\mathrm{best}} \gets k,m,y$\;
\While{$V_\mathrm{{min}} \geq V_{\mathrm{t}}$}{
$k,m,y \gets \modify(k,m,y,w)$\;
$\mathrm{Var} \gets \variance(k,m,y)$\;
\If{$\mathrm{Var} < V_{\mathrm{min}}$}{
$V_{\mathrm{min}} \gets \mathrm{Var}$\;
$k^{\mathrm{best}}, m^{\mathrm{best}}, y^{\mathrm{best}} \gets k,m,y$\;
}
}
\Return $k^{\mathrm{best}}, m^{\mathrm{best}}, y^{\mathrm{best}}$\;
\caption{Parameter Optimization (Search)}
\label{alg_optimization_2}
\end{algorithm}

\begin{figure}[t]
\centering
\includegraphics[width=0.48\textwidth]{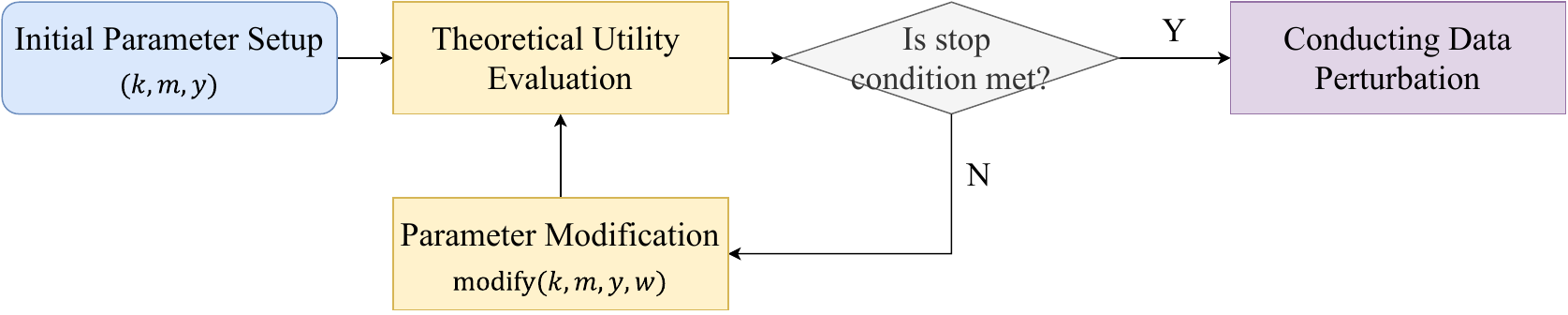}
\caption{Flow chart of parameter optimization.}
\label{fig_parameter_optimization}
\end{figure}

In this section, we further propose a general optimization algorithm to work out the best parameter setup for the proposed mechanism. Concretely, the final performance of the proposed mechanism is largely determined by the combined effects of the parameter setup $k, m$, and $y$ ($k$ and $m$ refer to the height and width of the activation function, respectively, and $y$ denotes the highest value of the base function). As none of them has a certain linear relationship with outcomes, it is practically difficult to directly derive the mathematical formula and identify the best solution. Hence, we utilize the iterative update approach to address this issue.

In practice, the iterative update is not an entirely new approach to parameter optimization and utility enhancement. Some past research utilizes the iterative approach to modify the parameters and generate less noise in DP mechanisms. For example, Ye~\etal~\cite{5,23} use iterative models to update $k$ and $v$ values for PrivKV and PrivKVM*, and Zhang~\etal~\cite{34} use an iterative approach to adjust the $\epsilon$ value. In addition, Zhu~\etal~\cite{35} use iteration to limit the noise by a presetting threshold. However, these solutions demand repeating the query of the database in every iteration procedure and adding the noise. It ignores the accumulative consumption of the privacy budget during the iteration. Once the allowable privacy budget runs out, the privacy is leaked. Therefore, we adopt the theoretical approaches of variance estimation to the iteration process for utility evaluation in a non-experimental way. The detail of the theoretical utility evaluation is given in Section~\ref{utility_eval}.

During the optimization process, as parameters $k, m$, and $y$ are independent of the $C_p$, we can adopt a hypothetical value for $C_p$ to compare variances across different parameter configurations (\eg $C_p = 0$, devoid of any influence from the actual input). As shown in Figure~\ref{fig_parameter_optimization} and Algorithm~\ref{alg_optimization}, a random parameter setup will be assigned to $(k, m, y)$ as initialization (line~\ref{alg_line_init}). A theoretical utility evaluation is further conducted to obtain the variance according to the functions determined by $(k, m, y)$ (line~\ref{alg_line_var_init}). Then, the parameter optimization process will start iteratively searching for a parameter setup that can obtain the minimum output variance until it reaches the largest iteration times $T$ (lines~\ref{alg_line_loop_starts} to~\ref{alg_line_loop_ends}). Particularly, the algorithm will modify the parameters according to step $w$ (line~\ref{alg_line_modify}). The modification function is determined by the iterative approaches; for example, if users enumerate all possible solutions, the modification function will orderly revise each parameter in a specific step size. If users utilize SGD, the modification function will randomly revise the parameter to lower the cost function, and then work out the theoretical variance. If a smaller variance is obtained, the corresponding parameters will be stored (lines~\ref{alg_line_check_min} to~\ref{alg_line_update_para}). When the stopping condition is satisfied, the algorithm will return the optimal parameter setup ($k^{\mathrm{best}}, m^{\mathrm{best}}, y^{\mathrm{best}}$) that (relatively) minimizes the output variance.

We note that there could be various optimization procedures implemented in practice, according to different stopping conditions defined by users. In Algorithm~\ref{alg_optimization}, we provide an enumeration optimization method as an example, which utilizes a set of steps (\eg 0.1, 0.01, and 0.001) and a threshold of maximum iterations as the stopping condition for each step, aiming to traverse the parameter space as much fine-grained as possible. In Algorithm~\ref{alg_optimization_2}, we provide a simpler optimization method that uses a threshold value of variance $V_t$ as the stopping condition, which employs only one step value to search one available parameter setup, \ie the optimization will stop when a certain variance is obtained. We argue that the proposed parameter optimization algorithm is a general approach. If users utilize SGD or other regression methods, the stopping condition could be whether the cost function has reached convergence. Considering the instability of SGD~\cite{wu2023training,sun2022surprising}, we use the enumeration optimization method (\ie Algorithm~\ref{alg_optimization}) in our experiments since it stably achieves the best solution. 

\subsection{Utility Evaluation}\label{utility_eval}
The utility evaluation takes a significant role in constructing the DP mechanism. For any two perturbation mechanisms and their perturbation function $P(x), P'(x)$, which both satisfy $\epsilon$-DP. When applying the same privacy budget $\epsilon$, these two mechanisms can be regarded as providing the same guarantee of privacy protection. Under this situation, the less variance of perturbed results, the better utility is provided, because less variance indicates less perturbation and distortion for the raw result. 
We first conduct a theoretical analysis of our proposed mechanism and the results of the experimental evaluation will be given in Section~\ref{sec:eval}. 

\begin{definition}[\textbf{Theoretical Variance}]\label{def_var}
Given the $\ell_1$-sensitivity $\Delta f$ of the query $f$, $C_p,$ $\cpmax$, $\cpmin \in \mathcal{L}$, and $C = \frac{\cpmax - \cpmin}{\Delta f}$. For any arbitrary points $\forall x \in \mathcal{L}$ in the output space of the perturbation mechanism with $P$, the theoretical variance $\mathrm{Var}: \mathcal{L} \rightarrow \mathcal{N}$ is defined as 
\begin{equation}\label{equ:var}
\begin{aligned}
    \mathrm{Var}(x) & = \frac{E(x^2) - (E(x))^2}{C^2}\\
    & = \frac{\int_{-L}^{L} P(x)x^2 dx - C_p^2}{C^2},
\end{aligned}
\end{equation}
where $E(x) = C_p$.
\end{definition}

Using~\eqref{equ:var} in Definition~\ref{def_var}, we have derived the theoretical variances for all six proposed perturbation functions discussed in Section~\ref{sec:perfunc}. The corresponding results are summarized in Table~\ref{table1}, with the detailed derivation processes.

We also give two alternative forms, \ie  $H_{1}^{\mathrm{Rate}}$  and $H_{2}^{\mathrm{Rate}}$, for calculating the utility estimation.

\begin{definition}[\textbf{$\mathbf{H_{1}^{\mathrm{Rate}}}$}]
Given $S_1$ and $S_2$ to represent the integral values of the activation function $H(x)$ and the base function $G(x)$ in domain $\mathcal{L}$, the metric $H_{1}^{\mathrm{Rate}}$ is defined as
\begin{equation}
    H_{1}^{\mathrm{Rate}} = \frac{S_2}{S_1}.
\end{equation}
\end{definition}
$H_{1}^{\mathrm{Rate}}$ indicates the ratio of the area of the base function to the activation function.  
The smaller value of $H_{1}^{\mathrm{Rate}}$ indicates the lower the variance and the better utility. 

\begin{definition}[\textbf{$\mathbf{H_{2}^{\mathrm{Rate}}}$}]
Given the mapped data $C_p \in \mathcal{L}$, the $S_{\mathrm{A}}(q)$ denotes the area (integral value) from $C_p$ to any point $q \in \mathcal{L}$. Let $n_1$, $n_2$ be partition scales. The metric $H_{2}^{\mathrm{Rate}}$ is defined as
\begin{equation}
    H_{2}^{\mathrm{Rate}} =
    \begin{cases}
        \frac{S_{\mathrm{A}}(C_p-n_1(L+C_p))}{S_{\mathrm{A}}(C_p-n_2(L+C_p))},& \text{if }~ C_p \geq 0,\\ \\
        \frac{S_{\mathrm{A}}(C_p+n_1(L-C_p))}{S_{\mathrm{A}}(C_p+n_2(L-C_p))}, & \text{if }~ C_p < 0,
    \end{cases}
\end{equation}
where $S_{\mathrm{A}}(q) = \int_{q}^{C_p} P(x)dx$.

\end{definition}

Intuitively, an increase in the fluctuation magnitude of the base function tends to impose limitations on the building of the activation function and irregularities in generating the perturbation function, ultimately resulting in an amplification of variance. Therefore, if we partition the area from $C_p$ to $L$ into two parts and assign the value of the area ratio to $H_{2}^{\mathrm{Rate}}$, then for any two perturbation mechanisms $\mathcal{M}_{c}, \mathcal{M}'_{c}$ that have the same activation function, the lower value of $H_{2}^{\mathrm{Rate}}$ indicates the greater constant of the base function and the lower variance (\ie the better utility). Additionally, as we use interquartile range (IQR) to partition the area, $n_1$ and $n_2$ values can be $\frac{1}{4}$ and $\frac{3}{4}$, respectively.

\section{Experimental Evaluation}\label{sec:eval}
In this section, we first present the experiment setup, including evaluation metrics and datasets, and then report the experimental results and analysis.

\subsection{Evaluation Metrics}
We utilize the following metrics to evaluate the performance of composite DP mechanism, with respect to output distribution, query response accuracy, and output bias.

\noindent\textbf{Privacy budget ($\epsilon$).} The privacy budget $\epsilon$ denotes the intensity of privacy protection in the perturbation function, \ie a smaller value of $\epsilon$ provides more indistinguishable outputs, and therefore achieves stronger privacy protection. 

\noindent\textbf{Relative Errors ($\mathrm{RE}$) and Mean Square Errors ($\mathrm{MSE}$).} We employ $\mathrm{RE}$ and $\mathrm{MSE}$ to evaluate the utility performance of a DP mechanism. These metrics serve as indicators of the distance (\ie the error), between the raw results $r_i$ and the perturbed results $r'_i$. The $\mathrm{RE}$ and $\mathrm{MSE}$ are defined as  
\begin{equation}
\mathrm{RE} = \frac{1}{N} \sum_{i=1}^{N}|r_i - r'_i|,
\end{equation}
\begin{equation}
\mathrm{MSE} = \frac{1}{N} \sum_{i=1}^{N}(r_i - r'_i)^2.
\end{equation}
Therefore, with the same privacy budget $\epsilon$, the smaller $\mathrm{RE}$ and $\mathrm{MSE}$ values, the better performance of an adopted DP algorithm.

\noindent\textbf{Accuracy Loss ($\mathrm{AL}$).} 
The $\mathrm{AL}$ denotes the disturbance degree for each query as defined below 

\begin{equation}\label{eq_acc_loss}
\mathrm{AL}_i = \Big{|}\frac{r_i-r'_i}{r_i} \Big{|}.
\end{equation}

\noindent\textbf{$\mathbf{H_{1}^{\mathrm{Rate}}}$ and $\mathbf{H_{2}^{\mathrm{Rate}}}$.} As discussed in Section~\ref{utility_eval}, $H_{1}^{\mathrm{Rate}}$ and $H_{2}^{\mathrm{Rate}}$ are used to quickly evaluate the variance for the proposed mechanism. We further provide experimental evidence to demonstrate that $H_{1}^{\mathrm{Rate}}$ and $H_{2}^{\mathrm{Rate}}$ proposed in Section~\ref{utility_eval} are practicable in composite DP design.

\subsection{Experimental Datasets and Benchmarks}
We compare our approach with traditional DP mechanisms, including the Laplace and Gaussian DP mechanisms. Specifically, we consider three datasets in our experiments, which are public representative datasets containing sensitive data features.

\noindent\textbf{Intensive Care Units (ICUs)~\cite{raffa2022global}} is a dataset containing patient information, vitals, chronic illnesses, and comorbidities. It includes 91,714 patient samples with detailed records from different hospitals, aiming to provide knowledge about the chronic condition for informing clinical decisions to ultimately improve patients' survival outcomes.

\noindent\textbf{Running and Heart Rate Data (RAHRD)~\cite{29}} is a dataset about running and heart rate summary, stop, start, and lap event data. It indicates a person utilized the smart wearable device to record its body information indicators when the person is jogging on the way. It contains 5,624 samples recorded across four years from 2017 to 2021 of running activities.

\noindent\textbf{Diabetes Dataset~\cite{30}} is originally from the National Institute of Diabetes and Digestive and Kidney Diseases. This dataset is used to predict whether a patient has diabetes using diagnostic measurements. It has 769 samples, including the physiological index of Pregnancy, Glucose, Blood Pressure, Skin Thickness, and Insulin.

\subsection{Experimental Results and Analysis}
We conduct each experiment 1,000 times and report the average results. 

\begin{figure*}[t]
\centering
\includegraphics[width=1.0\linewidth]{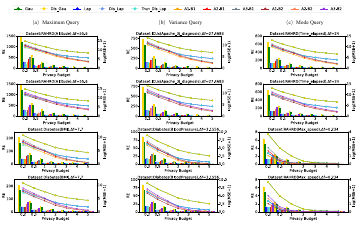}
\caption{$\mathrm{RE}$ and $\mathrm{MSE}$ values for the statistic queries.}
\label{fig_RE_MSE01}
\end{figure*}

\begin{figure*}[t]
\centering
\includegraphics[width=0.8\linewidth]{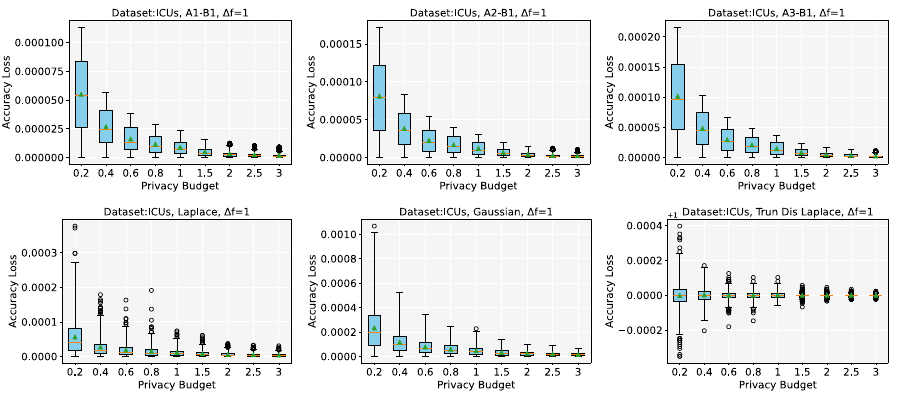}
\caption{Accuracy loss of the counting query (ICUs).}
\label{fig_acc_icus}
\end{figure*}

\begin{figure*}[t]
\centering
\includegraphics[width=0.8\textwidth]{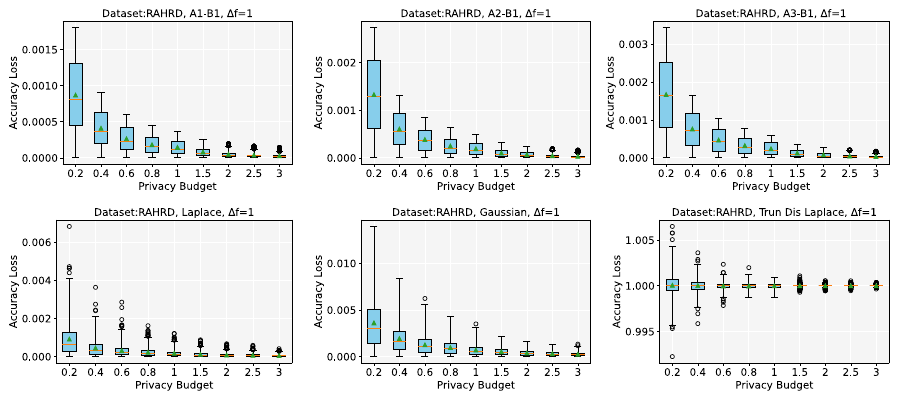}
\caption{Accuracy loss of the counting query (RAHRD).}
\label{fig_acc_rahrd}
\end{figure*}

\begin{figure*}[t]
\centering
\includegraphics[width=0.8\linewidth]{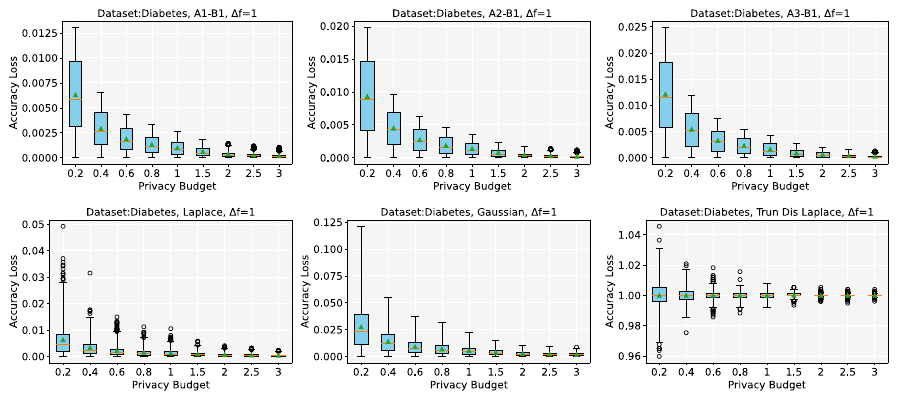}
\caption{Accuracy loss of the counting query (Diabetes).}
\label{fig_acc_diabetes}
\end{figure*}

\noindent\textbf{$\mathrm{RE}$ and $\mathrm{MSE}$ results.~} 
To evaluate the errors generated by DP mechanisms, we implement DP on queries of typical summary statistics, including Mode, Variance, Maximum, Minimum, and Mean. We evaluate our method within a broad range of $\epsilon$, \ie from 0.2 to 5, which are widely adopted in practical applications~\cite{zhang2022understanding,zhang2023robust,sander2023tan, bu2021fast,wei2022dpis,papernot2021tempered,nasr2021adversary,wang2022differential}. 

We present experimental results of Maximum, Variance and Mode queries in Figure \ref{fig_RE_MSE01}, and give more results of Minimum and Mean query in Appendix~\ref{appendix-B}. The experimental results indicate these statistics with the 6 perturbation functions presented in Section~\ref{sec:perfunc} (\ie \texttt{A1B1}, \texttt{A2B1}, \texttt{A3B1}, \texttt{A1B2}, \texttt{A2B2}, \texttt{A3B2}) and the comparison with the baselines (\ie the Laplace, the Discrete Laplace, the Truncated Discrete Laplace, the Gaussian and the Discrete Gaussian mechanisms~\cite{31,desfontaines2022differentially}). 
According to the experimental results, our proposed mechanism outperforms the Gaussian and the Discrete Gaussian mechanism. Among these combinations, 2 out of 6 ones outperform the Laplace, and the Discrete Laplace. The other 4 out of 6 combinations demonstrate weaker performance (compared to the Laplace, the Discrete Laplace and the Truncated Discrete Laplace) for small $\epsilon$ values, but exhibit stronger performance as $\epsilon$ values grow, across all evaluated summary statistics and datasets.

The combination \texttt{A1B1}  demonstrates the best performance among all combinations. On average, it surpasses the Gaussian mechanism by 85.68\% and 97.55\%, and surpasses the Discrete Gaussian mechanism by 87.07\% and 95.55\% in $\mathrm{RE}$ and $\mathrm{MSE}$, respectively. Additionally, it outperforms the Laplace mechanism (the Discrete Laplace mechanism has close performance to the Laplace mechanism) by 39.32\% in $\mathrm{RE}$ and the Truncated Discrete Laplace mechanism by 47.87\% in $\mathrm{MSE}$. Followed by \texttt{A1B2}, outperforms the Gaussian and the Discrete Gaussian mechanism by 85.64\% and 87.03\% in $\mathrm{RE}$, and outperform the Laplace and the Truncated Discrete Laplace mechanism by 65.23\% and 47.45\% in $\mathrm{MSE}$, respectively. The remaining four combinations, \texttt{A2B1}, \texttt{A2B2}, \texttt{A3B1}, \texttt{A3B2}, outshine the Gaussian mechanism by 80.62\%, 80.64\%, 76.53\%, 76.52\% in $\mathrm{RE}$, and 92.79\%, 92.55\%, 89.51\%, 89.48\% outshine the Discrete Gaussian mechanism in $\mathrm{MSE}$. However, they underperform the Laplace, the Discrete Laplace and the Truncated Discrete Laplace mechanisms when employing lower privacy budgets (\eg $\epsilon < 1$). 

Our method consistently exhibits enhanced performance across almost all settings. For example, for the Variance query, \texttt{A1B1} outperforms the Truncated Laplace mechanisms (the best-performing method among the baselines), reducing $\mathrm{MSE}$ by 90.34\% and 37.97\% when $\epsilon =5.0$ and $\epsilon =0.2$, respectively. For the Mode query, \texttt{A1B2} surpasses the Truncated Laplace mechanisms by 16.41\% and 35.89\% when $\epsilon =5.0$ and $\epsilon =0.2$, respectively.

\subsubsection{Effect of the activation function and base function} 
We find that different choices of the activation functions can significantly affect the performance. With the same base function, the activation function of a larger enclosed covered area  (\ie \texttt{A1} and \texttt{A2} in the proposed functions) will achieve significantly better performance as they yield a more concentrated probability output, thereby resulting in lower error rates. For example, the improvement gains of mode query for \texttt{A1B1} compared to \texttt{A3B1} in $\mathrm{RE}$ and $\mathrm{MSE}$ are 32.43\%, 55.91\%, and the improvement gains for \texttt{A2B2} compared to \texttt{A3B2} in $\mathrm{RE}$ and $\mathrm{MSE}$ are 14.63\%, 26.55\%, respectively.

In contrast, base functions have less impact on the performance as they only provide the basic probability for the output perturbation to guarantee privacy. For example, in the mode query, \texttt{A1B1} outperforms \texttt{A1B2} in $\mathrm{RE}$ by 0.60\%, and the improvement gains of \texttt{A2B1} in comparison to \texttt{A2B2} in $\mathrm{MSE}$ is 0.20\%. Among all two exemplar base functions, \texttt{B1} slightly outperforms the other one as it contributes less (\ie a relatively smaller proportion) to the final perturbation function with the same $k$ and $m$ values. 

\subsubsection{Effect of the sensitivity}
The $\ell_1$-sensitivity is determined by the nature of the datasets and the query functions (\ie the summary statistics in our experiments). 

We find that the sensitivity can affect the rate of descent of $\mathrm{RE}/\mathrm{MSE}$ with the increase of $\epsilon$ with respect to various combinations. The descent rate (unit: per $\epsilon$) regarding the increase of $\epsilon$ for the proposed mechanism \texttt{A1B1}, \texttt{A2B1}, \texttt{A3B1}, \texttt{A1B2}, \texttt{A2B2}, \texttt{A3B2} of $\mathrm{RE}/\mathrm{MSE}$. The maximum query in datasets RAHRD (attributes: Altitude) has a relatively higher sensitivity value of 55.6, and the $\mathrm{RE}$ descent rates among the proposed mechanisms \texttt{A1B1}, \texttt{A2B1}, \texttt{A3B1}, \texttt{A1B2}, \texttt{A2B2}, \texttt{A3B2} are 54.53, 83.78, 105.51, 54.1, 83.07, 105.68, respectively. While the mode query in datasets RAHRD (attributes: Max\_speed) has a relatively smaller sensitivity value of $0.234$, and the $\mathrm{RE}$ descent rates among these mechanisms are: 0.23, 0.35, 0.45, 0.23, 0.36, 0.44, respectively. 
The descent rate serves as an indicator of the statistical significance of the discrepancy between the output of a DP mechanism. When prioritizing a DP mechanism with a more significant output difference, \texttt{A3B1} and \texttt{A3B2} emerge as preferred selections; conversely, when prioritizing a mechanism with a smaller output difference, \texttt{A1B1} and \texttt{A1B2} are more favorable choices.

We also observe that a smaller sensitivity will result in a lower rate of descent. For example, for variance query on the Diabetes dataset (attributes: BloodPressure), which gives the sensitivity of 3.1556, the averaged performance of the proposed mechanisms achieves 22.60 and 703.57 of $\mathrm{RE}$ and $\mathrm{MSE}$ descending rates, while for mode query on the RAHRD Dataset (attributes: Time\_elapsed, which denotes the elapsed time since the initiation of running) which gives the sensitivity of 24, they achieve 172.77 and $4.12\times 10^{4}$ of $\mathrm{RE}$ and $\mathrm{MSE}$ descending rates. Consequently, when handling queries with high sensitivity, the selection of the mechanisms should carefully consider some specific circumstances, including the necessity of a more discernible output difference.

\noindent\textbf{Accuracy loss results.~} 
We then assess the accuracy loss ($\mathrm{AL}$) of Counting queries (\ie count the total number of records in the database).  We record $\mathrm{AL}$ for 1,000 queries and report the distributions of $\mathrm{AL}$ with box plots in Figures~\ref{fig_acc_icus} to \ref{fig_acc_diabetes}, where a tighter distribution refers to more controllable disturbance and thus better utility. 
Each group contains the Laplace, the Truncated Discrete Laplace and the Gaussian mechanisms and 3 perturbation functions including \texttt{A1B1}, \texttt{A2B1} and \texttt{A3B1}.   The employed privacy budget ranges from 0.2 to 3. The sensitivity of the Counting query is constant in this set of the experiments (\ie 1). The experimental results for the other three perturbation functions and the Discrete Laplace mechanism are demonstrated in Appendix \ref{appendix-C}.

In general, the performance of the averaged $\mathrm{AL}$ is in line with the results of $\mathrm{RE}$ and $\mathrm{MSE}$. \texttt{A1B1} yields the most favorable performance, follow by \texttt{A2B1} and \texttt{A3B1}. We find that all the 3 perturbation functions outperform the Laplace, the Truncated Discrete Laplace and Gaussian mechanisms in terms of the output distribution of $\mathrm{AL}$. \texttt{A1B1} performs best. It achieves the minimum output interval among the six mechanisms ($[0, 1.25\times 10^{-4}]$, $[0, 2\times 10^{-3}]$, $[0, 1.3\times 10^{-2}]$ in three datasets). \texttt{A3B1} has the maximum output interval ($[0, 2.5\times 10^{-4}]$, $[0, 4\times 10^{-3}]$, $[0, 2.5\times 10^{-2}]$ in three datasets). However, it still outperforms the  Laplace and Gaussian mechanisms. This outperformance can be attributed to the bounded feature of our approach, which enables better control over noise generation while maintaining the same privacy budget with less accuracy loss. It is worth noting that although the combinations with \texttt{A1} and \texttt{A2} (\eg \texttt{A1B1}, \texttt{A2B1}) have relatively small output ranges, these mechanisms may output outliers. For instance, the \texttt{A1B1} mechanism produces outliers on the ICUs, RAHRD, and Diabetes datasets, resulting in average accuracy losses of $1.44 \times 10^{-5}\%$, $2.30 \times 10^{-4}\%$, $1.65 \times 10^{-3}\%$, respectively.

The \texttt{A2B1} mechanism also generates outliers on each dataset, with the accuracy losses of $2.06 \times 10^{-5}\%$, $3.34 \times 10^{-4}\%$, $2.40 \times 10^{-3}\%$, respectively. This implies that the mechanisms employing \texttt{A1} and \texttt{A2} exhibit more instability compared to those employing \texttt{A3}. For users prioritizing output stability over a smaller variance, \texttt{A1B1} and \texttt{A2B1} are thus not recommended choices. Instead, \texttt{A3B1} should be considered, as they produce highly stable output distributions devoid of any outliers, and possess a more uniformly distributed range of $\mathupper$ and $\mathlower$ bounds. 

\noindent\textbf{$\mathbf{H_{1}^{\mathrm{Rate}}}$ and $\mathbf{H_{2}^{\mathrm{Rate}}}$ results.} Here, we investigate the variance of the proposed mechanism compared to the baselines, and evaluate the effectiveness of $H_{1}^{\mathrm{Rate}}$ and $H_{2}^{\mathrm{Rate}}$ as the indicators of utility.
As shown in Table~\ref{tab_H1_H2}, the performance ranking on variance is \texttt{A1B1}, \texttt{A1B2}, \texttt{A2B1}, \texttt{A2B2}, \texttt{A3B1}, \texttt{A3B2} with averaged variance of 11.448, 13.468, 25.746, 32.848, 41.107, 53.560, respectively. The results are in line with the performance in $\mathrm{RE}$, $\mathrm{MSE}$, and $\mathrm{AL}$, \ie a smaller variance results in better utility. 

\begin{table}[t]\small
\centering
\caption{$H_{1}^{\mathrm{Rate}}$ and $H_{2}^{\mathrm{Rate}}$ results.}
\label{tab_H1_H2}
\resizebox{0.8\linewidth}{!}{
\begin{tabular}{cccccccc}
\toprule
& $\epsilon$ & A1B1 & A2B1 & A3B1 & A1B2 & A2B2 & A3B2 \\
\midrule
\multirow{5}*{$H_{1}^{\mathrm{Rate}}$} & 0.2 & 9.382  &  15.308  &  18.263  &  9.571  &  15.942  &  18.263 \\
~ & 0.3 & 6.082  &  9.103  &  11.713  &  6.400  &  10.335  &  13.065 \\
~ & 0.4 & 4.447  &  6.711  &  8.456  &  5.536  &  8.106  &  8.782 \\
~ & 0.5 & 3.456  &  5.166  &  6.481  &  4.017  &  5.724  &  7.159 \\ 
~ & 1  & 1.466  &  2.093  &  2.588  &  1.752  &  2.628  &  2.743 \\
\midrule
\multirow{5}*{$H_{2}^{\mathrm{Rate}}$} & 0.2 & 0.333 & 0.333 & 0.333 & 0.334 & 0.334 & 0.334 \\
~ & 0.3 & 0.333 & 0.333 & 0.333 & 0.334 & 0.334 & 0.335 \\
~ & 0.4 & 0.333 & 0.333 & 0.333 & 0.335 & 0.335 & 0.334 \\
~ & 0.5 & 0.333 & 0.333 & 0.333 & 0.335 & 0.335 & 0.334 \\
~ & 1 & 0.333 & 0.333 & 0.333 & 0.335 & 0.336 & 0.335 \\   
\midrule
\multirow{5}*{Variance} & 0.2 & 31.714  &  69.653  &  112.554  &  33.986  &  85.726  &  141.532 \\ 
~ & 0.3 & 13.162  &  31.343  &  49.844  &  15.872  &  35.971  &  77.200 \\
~ & 0.4 & 7.218  &  16.325  &  25.106  &  11.117  &  27.320  &  28.311 \\
~ & 0.5 & 4.225  &  9.705  &  15.412  &  5.331  &  13.070  &  17.624 \\
~ & 1 & 0.921  &  1.702  &  2.620  &  1.036  &  2.152  &  3.133 \\
\bottomrule
\end{tabular}}
\end{table}

We also compare the $H_{1}^{\mathrm{Rate}}$ and $H_{2}^{\mathrm{Rate}}$ among the proposed mechanisms. The $H_{1}^{\mathrm{Rate}}$ refers to the proportion of the base function area to that of the activation function, which implies that a higher proportion of the base function leads to a more dispersed distribution of the perturbation function and consequently increases the variance. For example, the mechanism \texttt{A2B2} has higher $H_{1}^{\mathrm{Rate}}$ compared to \texttt{A1B2}, which leads to an average increase of 19.380 in variance. In regard to the $H_{2}^{\mathrm{Rate}}$, it represents the ratio of the area of the perturbation function near and far from the input. A lower $H_{2}^{\mathrm{Rate}}$ corresponds to a more constant base function and leads to lower variances. For example, the mechanism \texttt{A3B1} has lower $H_{2}^{\mathrm{Rate}}$ than \texttt{A3B2}, which results in an average decrease of 12.453 in variance. In conclusion, for mechanisms with the same base function, a higher $H_{1}^{\mathrm{Rate}}$ leads to lower utility performance, while for mechanisms with the same activation function, a lower $H_{2}^{\mathrm{Rate}}$ results in better utility performance. 

We also evaluate the bias rate of the proposed mechanism, and the results demonstrate the attainability of the unbiased performance. More details on the results of the bias rates are provided in Appendix~\ref{appendix-bias}. 

\section{Related Work}
Differential privacy (DP) is a  mathematical framework that provides a rigorous privacy assurance, effectively mitigating the risks of individual information disclosure~\cite{8,31,desfontaines2022differentially}. 
In real-world applications where output range constraints are prevalent, post-processing methods and truncated DP mechanisms are commonly employed to address such limitations. 

Post-processing mechanisms are applied on the outputs of the perturbed data, restricting their range to the feasible region~\cite{10,11,12,13,14}. Hay~\etal~\cite{10} employed post-processing to sort and round the non-integer or negative value to refine outcomes to boost the accuracy of a general class of histogram queries.
Based on this study, Qardaji~\etal~\cite{11} utilized m$\mathrm{MSE}$ to examine the factors that affect the accuracy of hierarchical approaches. They employed post-processing to diminish noise and enhance the accuracy of stratified methods through the application of constraint reasoning techniques. Cormode~\etal~\cite{12} conducted a specific post-processing method to improve the query accuracy of noisy counts. Balle and Wang~\cite{13} revisited the Gaussian mechanism and equipped it with a novel post-processing step based on adaptive estimation techniques. Lee~\etal~\cite{14} formulated the post-processing step as a constrained maximum likelihood estimation problem to further improve its output accuracy. 
Moreover, the post-processing operations have been utilized to handle released data in various applications, including mobility data for transportation~\cite{15}, power grid obfuscation~\cite{16}, smart grid pricing~\cite{17,xue2024rai4ioe}, and trajectory information~\cite{18}. 

Similarly, truncated DP mechanisms exhibit comparable effects, primarily built upon the Laplace or Gaussian mechanisms~\cite{37}. 
The truncated mechanism typically entails adjusting the probability density function of the original noise distribution to ensure that the output values remain strictly within the valid range. Geng \etal~\cite{36} utilized a truncated Laplace mechanism to reduce noise amplitude by increasing the decay rate of probability density while maintaining $\epsilon$-DP. Bun~\etal~\cite{38} proposed a truncated Concentrated DP (tCDP) mechanism with limited outputs and stronger privacy guarantees. 

However, both post-processing approaches and truncated DP are prone to bias issues as they cause the expected value of the perturbed data to deviate from the original input~\cite{6}. Previous studies introduced diverse definitions of bias problems according to different specific application scenarios. For instance, Tran~\etal~\cite{tran2021decision} defined the bias as the disparity impact of noise on two particular applications, including an allotment problem that distributes a finite set of resources among entities, and a decision rule that assesses if a particular entity is eligible for certain benefits. In contrast, our definition focuses on the fundamental cause of the bias arising from bounded DP mechanisms, ensuring the expected value of perturbed outputs aligns with the raw inputs, which can be applied to generic scenarios. Consequently, the unbiasedness of our proposed mechanism encompasses the one proposed in Tran~\etal~\cite{tran2021decision} and is also adaptable to a wider range of scenarios. Wang~\etal~\cite{wang2020locally} employed an LDP mechanism combined with post-processing techniques to prevent the occurrence of negative values or a sum of frequencies not equaling 1 due to noise. In their context, this concept refers to consistency. However, the unbiased perturbation is not guaranteed. 

To conquer the limitations of post-processing or truncated mechanisms, we propose a novel composite DP mechanism that enables users to conduct controllable perturbation with bounded and unbiased outputs. Our method offers a higher degree of freedom as it allows users to define their perturbation function for perturbation, leading to an infinite set of DP mechanisms that satisfy this framework. Given that different use cases may require various definitions of utility, it is difficult to determine the best solution for the proposed mechanism. However, by decreasing the proportion of the base function (as presented by $H_{1}^{\mathrm{Rate}}$ and $H_{2}^{\mathrm{Rate}}$), users can improve the results in terms of variance and errors produced by the mechanism. To the best of our knowledge, this is the first exploration of building DP on the practical composite mechanism, and our work bridges the knowledge gap on how to embed the bounded output feature into the perturbation function with respect to DP mechanisms.

We also note that our proposed composite DP mechanism differs from the composition theory~\cite{39,40,41,42} 
in the realm of DP research. Composition theory involves the sequential application of multiple DP mechanisms. This approach ensures that privacy is preserved even with degraded parameters of individual mechanisms. For example, Vadhan and Wang~\cite{32} and Lyu~\cite{33} recently explored the concurrent composition in interactive DP mechanisms, highlighting the inherent feature of interactivity in DP. 

\section{Conclusion}

The Laplace and Gaussian DP mechanisms are contemporarily popular techniques for protecting data privacy in various real-world applications, due to their provable privacy guarantees. However, traditional DP mechanisms generate unbounded outputs, which deviate from physical reality in many specific use cases. Previously, users typically resorted to post-processing or truncated mechanisms to constrain the output range, but it would introduce bias problems. To address these issues, we propose a novel composite DP mechanism that enables controllable perturbation and produces bounded and unbiased outputs, sidestepping the drawbacks of post-processing and truncated mechanisms. We conducted a comprehensive evaluation of our proposed mechanism through theoretical analysis and experimental validation. Additionally, we proposed a general iterative parameter optimization algorithm to reduce the noise and enhance the utility of the mechanism. Our experimental results demonstrate that the proposed mechanism performs consistently well across diverse applications while simultaneously maintaining the same level of privacy protection strength. 
As there are no specific constraints applied to the distributions of the base and the activation functions (which offer a high degree of flexibility), the proposed mechanism has a loose analysis and may not fully achieve optimal utility. Applying specific distributions to the base and the activation function would enable a tighter analysis and potentially further improve utility. We hope that the bounded and unbiased composite DP mechanism developed in this paper can lay the foundation for future privacy-preserving studies.

\section*{Acknowledgments}
This work was supported by Australian Research Council (ARC) DP190101893, DP240103068, and LP180100758.

\bibliographystyle{IEEEtran}
\bibliography{reference}

\balance

\begin{appendices}

\begin{figure*}[t]
\centering
\includegraphics[width=\linewidth]{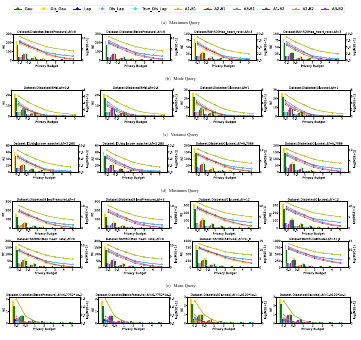}
\caption{Additional experimental results of $\mathrm{RE}$ and $\mathrm{MSE}$ for various statistic queries.}
\label{fig_RE_MSE02}
\end{figure*}

\begin{figure*}[t]
\centering
\includegraphics[width=\textwidth]{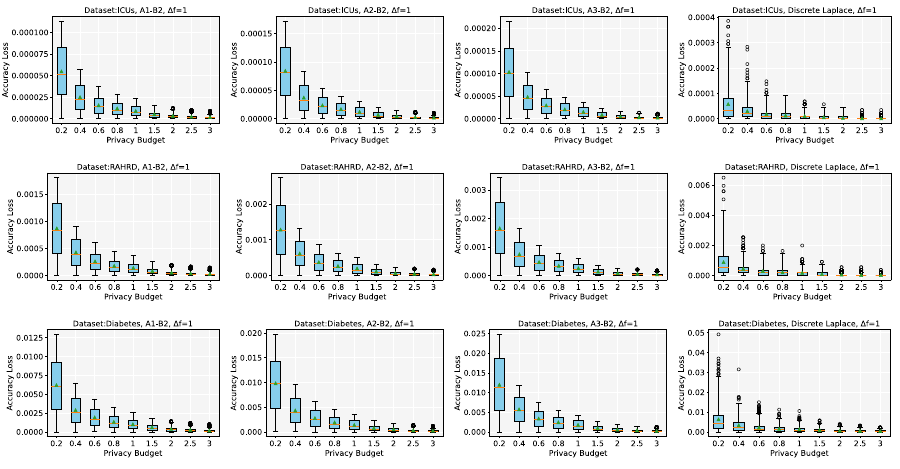}
\caption{Additional experimental results for accuracy loss.}
\label{fig_acc_appendix}
\end{figure*}

\section{Additional Experiments for $\mathrm{RE}$ and $\mathrm{MSE}$}\label{appendix-B}

This section demonstrates the additional experiments of $\mathrm{RE}$ and $\mathrm{MSE}$ values for the different statistic queries (\eg Maximum, Mode, Variance, Minimum, Mean queries). As shown in Figure~\ref{fig_RE_MSE02}, the experiments including six perturbation functions presented in Section~\ref{sec:perfunc} and the comparison with the baselines (\ie the Laplace, the Discrete Laplace, the Truncated Discrete Laplace, the Gaussian and the Discrete Gaussian mechanisms). The experimental trend is consistent with that described in the main text. The combination \texttt{A1B1} demonstrates the best performance among all combinations, followed by \texttt{A1B2}, \texttt{A2B1}, \texttt{A2B2}, \texttt{A3B1} and \texttt{A3B2}.

\section{Additional Experiments for $\mathrm{AL}$}\label{appendix-C}

This section demonstrates the experiments of accuracy loss estimation of the other three perturbation functions and the Discrete Laplace for the counting query. The employed privacy budget is from 0.2 to 3.0. 

As shown in Figure~\ref{fig_acc_appendix}, the experimental results in this section are similar to those in the main text. Due to the characteristic of unbounded outputs, the Discrete mechanism has more dispersed results, and abnormal values. The experimental result indicates that the Discrete Laplace mechanisms have greater accuracy loss than the proposed mechanisms.

\section{Additional Experiments for Bias Rate}\label{appendix-bias}
This section demonstrates an emperical performance in terms of unbiasedness. We calculate the bias rate according to Definition 2.4 (\ie the difference between the expected value of the perturbed data and the original data). For $\epsilon=0.2$, our mechanism's least favorable performance, \texttt{A3B1}, only exhibits a 0.156\% bias rate. At $\epsilon=5$, all proposed mechanisms demonstrate a 0.001\% bias rate, empirically verifying their unbiased nature.

\begin{table}[h]
\centering
\caption{Experimental results of bias rate}
\label{tab:bias_rate}
\resizebox{\linewidth}{!}{
\begin{tabular}{c|c|c|c|c|c}
\toprule
& \multicolumn{5}{c}{Bias Rate} \\
\hline
\diagbox{Method}{$\epsilon$ Value} & 0.2 & 0.5 & 1 & 2 & 5 \\
\midrule
A1B1 & 0.081\% & 0.030\% & 0.013\% & 0.005\% & 0.001\% \\
A2B1 & 0.123\% & 0.044\% & 0.018\% & 0.006\% & 0.001\% \\
A3B1 & 0.156\% & 0.055\% & 0.022\% & 0.007\% & 0.001\% \\
A1B2 & 0.081\% & 0.029\% & 0.013\% & 0.005\% & 0.001\% \\
A2B2 & 0.124\% & 0.044\% & 0.018\% & 0.006\% & 0.001\% \\
A3B2 & 0.155\% & 0.055\% & 0.022\% & 0.007\% & 0.001\% \\
\bottomrule
\end{tabular}}
\end{table}

\begin{figure*}[t]
\centering
\captionof{table}{Constraints on hyper-parameters of perturbation functions. 
}\label{table_appendix}
\includegraphics[width=\linewidth]{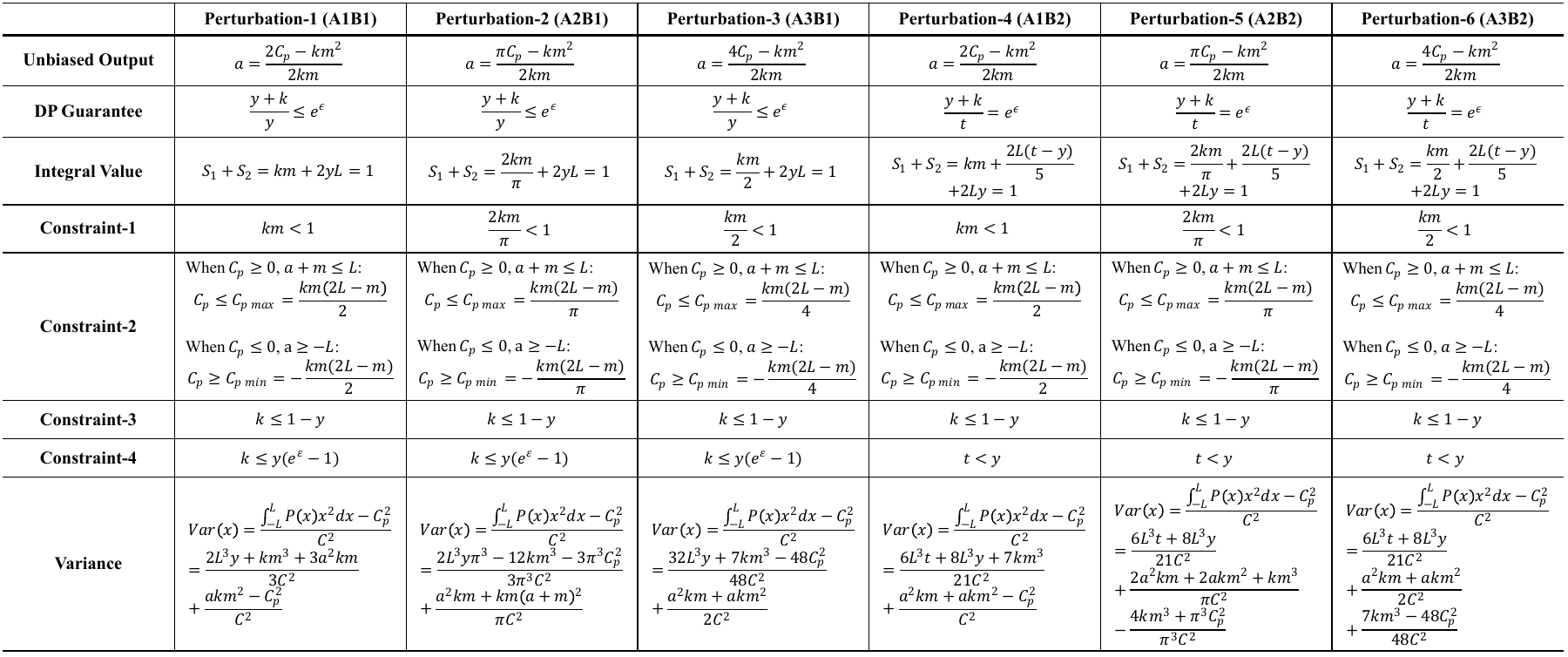}
\end{figure*}

\section{Perturbation Functions}\label{appendix-A}
In Table~\ref{table_appendix}, we present six perturbation functions and the corresponding embedded constraints on hyper-parameters (\ie $k$, $m$ and $y$) to satisfy the aforementioned unbiased and DP properties. 

\newpage
\newpage 
\clearpage
\section{Meta-Review}

The following meta-review was prepared by the program committee for the 2024
IEEE Symposium on Security and Privacy (S\&P) as part of the review process as
detailed in the call for papers.

\subsection{Summary}
This paper addresses the problem of unbiased private estimators for bounded domains by employing a base noise along with an additional activation noise designed to remove bias while still maintaining privacy guarantees. The effectiveness of the proposed mechanism is evaluated through both theoretical analysis and empirical experiments, demonstrating significant enhancements in terms of utility and privacy assurances in comparison to conventional Differential Privacy (DP) methods.

\subsection{Scientific Contributions}
\begin{itemize}
\item Creates a New Tool to Enable Future Science.
\item Addresses a Long-Known Issue.
\item Provides a Valuable Step Forward in an Established Field.
\end{itemize}

\subsection{Reasons for Acceptance}
\begin{enumerate}
\item The paper creates a new tool to enable future science. The proposed seminal composite DP mechanism produces bounded and unbiased outputs without the need for post-processing or truncation. This mechanism establishes a new building block for future privacy-preserving research. The authors have made the source code and the associated artifact publicly available, thereby expanding the DP toolkit and enabling future endeavors in the realm of privacy research.
\item The paper addresses a long-known Issue. The problem of bias introduced by truncation is a widely recognized issue. The approach developed in this paper introduces the first unbiased private estimator designed for bounded domains. It also affords a considerable level of flexibility, enabling users to define their perturbation functions and thereby construct a wide range of DP mechanisms. Additionally, the authors demonstrate a method for tuning the parameters in achieving variance minimization independently of the data prior to its application.
\item The paper provides a valuable step forward in an established field. The paper provides a new DP mechanism that outperforms other mechanisms according to the benchmarks presented in the paper. This evaluation involves assessing the variance of the composite probability density function and introducing alternative metrics that offer simplified computation compared to variance estimation. The comprehensive evaluation of this mechanism using three benchmark datasets highlights significant improvements when compared to conventional Laplace and Gaussian mechanisms.
\end{enumerate}

\subsection{Noteworthy Concerns} 
There are no specific constraints applied to the distributions of the base and the activation function. While this offers a high degree of flexibility, it leads to a loose analysis of the mechanism, indicating that it may not fully achieve optimal utility. Applying specific distributions to the base and the activation function would enable a tighter analysis and potentially further improve utility. However, the reviewers have reached the conclusion that it is of tremendous value and immense importance to publish this novel DP technique in its current form.

\end{appendices}

\end{document}